\newtheorem{problem}{Problem}
\newtheorem{theorem}{Theorem}
\theoremstyle{plain}
\newtheorem{lemma}{Lemma}
\theoremstyle{definition}
\newtheorem{definition}{Definition}
\newtheorem{example}{Example}
\theoremstyle{remark}
\title{\LARGE \bf
Safety Control with Preview Automaton
}
\author{Zexiang Liu and Necmiye Ozay
	\thanks{Zexiang Liu and Necmiye Ozay are with the Dept. of Electrical Engineering and Computer Science, Univ. of Michigan, Ann Arbor,
		MI 48109, USA
		{\tt\small zexiang,necmiye@umich.edu}. This work was supported by ONR grant N00014-18-1-2501, NSF grant ECCS-1553873, and an Early Career Faculty grant from NASA's Space Technology Research Grants Program. A poster abstract on our preliminary results was presented at HSCC \cite{zexiang2019poster}.}%
}
\begin{document}
	
	\maketitle
	\thispagestyle{empty}
	\pagestyle{empty}

	\begin{abstract}
	This paper considers the problem of safety controller synthesis for systems equipped with sensor modalities that can provide preview information. We consider switched systems where switching mode is an external signal for which preview information is available. In particular, it is assumed that the sensors can notify the controller about an upcoming mode switch before the switch occurs. We propose preview automaton, a mathematical construct that captures both the preview information and the possible constraints on switching signals. Then, we study safety control synthesis problem with preview information. An algorithm that computes the maximal invariant set in a given mode-dependent safe set is developed. These ideas are demonstrated on two case studies from autonomous driving domain.
		
	\end{abstract}

	\section{Introduction}

Modern autonomous systems, like self-driving cars, unmanned aerial vehicles, or robots, are equipped with sensors like cameras, 
radars, or GPS that can provide information about what lies ahead. 
Incorporating such preview information in control and decision-making is an appealing idea to improve system performance. For instance, a considerable amount of work has been done for designing closed-form optimal control strategies with limited preview on future reference signal \cite{sheridan1966three,tomizuka1975optimal,katayama1985design,hazell2008discrete} with applications in vehicle control \cite{peng1993preview,xu2019design}. Preview information or forecasts of external factors can also be easily incorporated in a model predictive control framework \cite{garcia1989model,laks2011model}.

However, similar ideas are less explored in the context of improving system's safety assurances. For the aforementioned methods on optimal control with preview information, extra constraints need to be introduced into the optimal control problem to have safety guarantees, for which the closed-form optimal solution cannot be derived easily in general. In model predictive control framework, the online optimization is only tractable over a finite receding horizon, and thus it is difficult to have assurance on safety for which the constraints need to be satisfied over an infinite horizon. The goal of this paper is to develop a framework to enable incorporation of preview information in correct-by-construction control.

Specifically, we consider discrete-time switched systems where the mode signal is controlled by external factors. We assume the system is 
equipped with sensors that provide preview information
on the mode signal (i.e., some future values of the mode signal can be sensed/predicted at run-time). To capture
how the mode signal evolves and how it is sensed/predicted at run-time, we introduce \emph{preview automaton}.
Then, we focus on safety specifications defined
in terms of a safe set within each mode, and develop an algorithm that computes the 
maximal invariant set inside these safe sets while incorporating the preview information. A simple example where such information
can be relevant is depicted in Fig.~\ref{fig:example}, where an autonomous vehicle can use its forward looking sensors or GPS and map information
to predict when the road grade will change. The proposed framework provides a means to leverage such information to compute provably-safe controllers
that are less conservative compared to their preview agnostic counterparts. 

\begin{figure}
	\centering
	\includegraphics[width=1\linewidth]{}
	\caption{A simple example on autonomous vehicle cruise control. The road grade alternates between three ranges $r_1$, $r_2$, $r_3 $,  modeled by a switched system $ \Sigma$ with three modes. The blue shadows indicate the regions where the vehicle's sensors are able to look ahead and upon the detection of the upcoming change, a preview input is released.}
	\label{fig:example}
\end{figure}

Our work is related to \cite{kupferman2011synthesis,holtmann2010degrees,zimmermann2017finite} where synthesis from linear temporal logic or general omega-regular 
specifications are considered for discrete-state systems. In \cite{kupferman2011synthesis}, it is assumed that a fixed horizon lookahead is available; 
whereas, in our work, 
the preview or lookahead time is non-deterministic, and preview automaton can be composed with both discrete-state and continuous-state
systems. While we restrict our attention to safety control synthesis, extensions to other logic specifications are also possible.
Another main difference with \cite{kupferman2011synthesis} is that we use the preview automaton also to capture constraints 
on mode switching. The idea of using automata or temporal logics to capture assumptions on mode switching is used in \cite{athanasopoulos2016safety}
and \cite{nilsson2012temporal}. In particular, the structure of the controlled invariant sets we compute is similar to the invariant sets (for systems without control)
in \cite{athanasopoulos2016safety}. However, neither \cite{athanasopoulos2016safety} nor \cite{nilsson2012temporal} takes into account preview information.

The remainder of this paper is organized as follows. After briefly introducing the basic notations next, in Section \ref{sec:setup}, we describe the problem setup,
 define the preview automaton and formally
state the safety control problem. An algorithm to solve
the safety control problem with preview is proposed and analyzed in Section \ref{sec:compute}. In Section \ref{sec:cases}, we demonstrate the proposed 
algorithm by two case studies one on vehicle cruise control and another on lane keeping before we conclude the paper in Section \ref{sec:conclusions}.

\noindent{\em Notation:} 
We use the convention that the set $ \mathbb{N} $ of natural number contains $ 0 $. Denote the expanded natural numbers as $ \overline{\mathbb{N}} = \mathbb{N}\cup \{\infty \} $. Denote the power set of set $ X $ as $ 2^{X} $.

\section{Problem Setup}\label{sec:setup}

In this paper, we consider switched systems $\Sigma$ of the form:
\begin{align}
x(t+1) \in f_{\sigma(t)}(x(t), u(t)),\label{eqn:model}
\end{align}
where $\sigma(t)\in\{1, \ldots, s\}$ is the mode of the system, $x(t) \in X$ is the state and $u(t) \in U$ is the control input. We assume that the switching is uncontrolled (i.e., the mode $\sigma(t)$ is determined by the external environment) however $\sigma(t)$ is known when choosing $u(t)$ at time $t$. By defining each $f_i: X\times U \rightarrow 2^{X}$ to be set-valued, we capture potential disturbances and uncertainties in the system dynamics that are not directly measured at run-time but that affect the system's evolution. 

We are particularly interested in scenarios where some \emph{preview} information about the mode signal is available at run-time. That is, the system has the ability to lookahead and get notified of the value of mode signal before the mode signal switches value. More specifically, we assume that for each pair of modes $ (i,j) $, if the switching from $ i $ to $ j $ takes place next, a sensor can detect this switching for $ \tau_{ij} $ time steps ahead of the switching time, where $ \tau_{ij}\geq 0 $ is called \emph{the preview time} and belongs to a time interval $ T_{ij} $. Mathematically, if $\sigma(t+\tau_{ij}-1) =i $ and  $\sigma(t+\tau_{ij})=j$, then the value $\sigma(t+\tau_{ij})$ is available before choosing $u(t)$ at time $t$, for some $\tau_{ij}\in T_{ij}$.

In many applications, switching is not arbitrarily fast. That is, there is a minimal holding time (or, dwell time) between two consecutive switches. For each mode $ i $ of the switched system, we associate a \emph{least holding time} $ H_i\geq 1 $ such that if the system switches to mode $ i $ at time $ t $, the environment cannot switch to another mode at any time between $ t $ and $ t+H_i-1 $. Note that $ H_i=1 $ for all $ i $ is the trivial case where the system does not have any constraints on the least holding time. Moreover, there could be constraints on what modes can switch to what other modes. 

Following example illustrates some of the concepts above.

\begin{example}\label{ex:car_grade}
	In Figure \ref{fig:example}, a vehicle runs on a highway where the road grade can switch between ranges $ r_1=[-30.5,-29.5] $ and $ r_2=[-0.5,0.5] $ and between $ r_2 $ and $ r_3 = [29.5,30.5] $ (no direct switching between $ r_1 $ and $ r_3 $). We use a switched system $ \Sigma $ of $ 3 $ modes to model the three ranges $ r_1 $, $ r_2 $ and $ r_3 $. Thanks to the perception system on the vehicle, the switching from mode $ i $ to mode $ j $ can be detected $\tau_{ij}\in T_{ij} $ steps ahead, where $ T_{ij} $ is a known interval of feasible preview times for $ (i,j)\in \{(1,2),(2,1),(2,3),(3,2)\} $. Also, the least time steps for the vehicle in range $ r_i $ is $ H_i$ for $ i=1,2,3 $. \qed\label{eg:1}
\end{example}

For simplicity, in the rest of the paper, we will assume that the least holding time is greater than or equal to the least feasible preview time among all modes that the system can switched to from mode $ i $, i.e., $ H_i \geq \min(\cup_{j}T_{ij}) $ for any mode $ i $. This assumption is justified in many applications where switching is ``slow" compared to the worst-case sensor range. For instance, the road curvature or road grade does not change too frequently.

The main contribution of this paper is two folds:
\begin{itemize}
  \item to provide a new modeling mechanism for switched systems that can capture both the constraints on the switching and the preview information, 
 \item to develop algorithms that can compute
   controllers to guarantee safety with preview information in a way that is less conservative compared to their preview agnostic counterparts.
\end{itemize} 


\subsection{Preview Automaton}\label{sec:pa_def}
Provided the prior knowledge on the preview time interval $ T_{ij} $ and the least holding time $ H_i $, we model the allowable switching sequences of a switched system with preview with a mathematical construct we call \emph{preview automaton}.
\begin{definition}(\textbf{Preview Automaton})
	A \emph{preview automaton} $ G $ corresponding to a switched system $ \Sigma$ with $ s $ modes is a tuple $ G = \{ Q, E ,T,H\} $, where
	\begin{itemize}
		\setlength\itemsep{-0.8em}
		\item $ Q = \{1,2,\cdots, s\} $ is a set of nodes (discrete states), where node $ q\in Q $ corresponds to the mode $ q $ in $ \Sigma $;\\
		\item $ E\subseteq Q\times Q $ is a set of transitions;\\
		\item $ T:E\rightarrow \{[t_1,t_2]: 0\leq t_1\leq t_2, t_1\in \mathbb{N}, t_2\in \overline{\mathbb{N}} \} $ labels each transition with the time interval of possible preview times corresponding to that transition;\\
		\item  $ H:Q\rightarrow (\mathbb{N}\backslash \{0\})\cup \{\infty\} $  labels each node $ q \in Q $ with the least holding time corresponding to that node.
	\end{itemize}
	\label{def:pa}
\end{definition}
We make a few remarks. First, we do not allow any self-loops, i.e., $ (q,q) \not\in E $ for all $ q\in Q $. Second, the preview times $ T(q_1,q_2)$ for any $ (q_1,q_2)\in E $ is in one of the three forms: a singleton set $ \{t_1\}$ (interval $ [t_1,t_1] $), or a finite interval $ [t_1,t_2]$ with $ t_1<t_2 $, or an infinite interval $ [t_1,\infty)$. Finally, if there is a state $ q $ with no outgoing edges, that is $ \{(q,p)\in E:p\in Q \}=\emptyset$, we set $ H(q) = \infty $ to indicate that once the system visits $ q $, it remains in $ q $ indefinitely, so the deadlocks are not allowed. We call such a state a sink state. The set of sink states and the set of non-sink states in $ Q $ are denoted by $ Q_{s} $ and $ Q_{ns}$.

 In Definition \ref{def:pa}, the nodes of the preview automaton are chosen to be the modes of the switched system for simplicity. It is easy to extend the definition to allow multiple nodes in the preview automaton to correspond to the same mode. Alternatively, redefining the switched system by replicating certain modes and keeping the current definition can serve the same purpose.

\begin{example}
	The preview automaton for the switched system in Example \ref{eg:1} has nodes $  Q = \{1,2,3\} $ with transitions shown in Figure \ref{fig:previewautomata}. The least holding mapping $ H(q) = H_q $ for all $ q\in Q $ and $ T(q_1,q_2) = T_{q_1q_2} $ for $ (q_1,q_2) \in  \{(1,2),(2,1),(2,3),(3,2)\} $. \qed
\end{example}

\begin{figure}
	\centering
	\includegraphics[width=0.8\linewidth]{}
	\caption{This preview automaton corresponds to the switched system in Example \ref{eg:1}.}
	\label{fig:previewautomata}
\end{figure}

Any transition in the preview automaton is associated with an input in the form of the preview of the switching mode. We assume that there is at most one preview between any two consecutive switches. During the execution of the preview automaton, if a preview takes place at time $ t $, there is a corresponding \emph{preview input} of the preview automaton, including the timestamp $ t $ of the occurrence of the preview, the destination state $ d\in Q $ of the next transition and the remaining time steps (the preview time) $ \tau $ from the current time $ t $ up to the next transition.  If no preview takes place before the next switching time\footnote{This is possible when the lower bound of the time interval of possible preview times is $ 0 $.}, the preview input corresponding to that switch is trivially $ (t, 0, q) $, where $ t $ and $ q $ are the time instant and destination of the next transition. Note that $ t+\tau $ is the time that the system transits from the last mode to the mode $ d $.

\begin{definition}Given preview automaton $ G=\{ Q,E,T,H\}  $ and initial state $ q_0\in Q $, a sequence of tuples $ \{(t_{k},\tau_{k}, d_{k})\}_{k=1}^{N}$ ($N < \infty$ when the system remains in $ d_{N} $ after $ t\geq t_{N}+\tau_{N} $) is a valid \emph{preview input sequence} of $G$ if for all {\color{black} $1\leq k\leq N $ }, the sequence satisfies (with $ t_0=0 $, $ \tau_0=0 $, $ d_0=q_0 $) that \\
(1) $ \tau_{k-1}\geq 0 $ and  $ t_{k-1}+\tau_{k-1}\leq t_{k} $ and\\
(2) $ (d_{k-1},d_k)\in E $ and $ \tau_{k}\in T(d_{k-1},d_{k}) $ and\\
(3) $ (t_{k}+\tau_k-1) -( t_{k-1}+\tau_{k-1})\geq H(d_{k-1})$.\label{def:input_pa}
\end{definition}
In above definition, conditions (1), (2) and (3) guarantee that only one preview input is received between two consecutive switches, the mode switch constraints and preview time constraints are met, and the holding time constraint is met, respectively. Once a valid input sequence is given, we can uniquely identify the transitions of the preview automaton over time, that is, the execution of the preview automaton with respect to that input sequence. In the rest of the paper, we only consider valid preview input sequences and drop the word valid when it is clear from the context.

\begin{definition}
Given preview automaton $G = \{Q,E,T,H\}$ and a preview input sequence $ \{t_{k},\tau_{k}, d_{k}\}_{k=1}^{N}$, the \emph{execution} of $G$ with respect to the preview input sequence is a sequence of tuples $\{( I_{k},q_{k})\}_{k=0}^{N}$, where \\
(1) $ I_0=[0,t_1+\tau_1 -1] $ and $ I_k=[t_k+\tau_k,t_{k+1}+\tau_{k+1}-1]$ for all $ k\geq 0 $, \\
(2) $q_k=d_k $ for all $ k\geq 1 $.  
\label{def:run_pa} 
\end{definition}
Note that two different valid preview input sequence may have the same execution. According to Definition \ref{def:input_pa} and \ref{def:run_pa}, the set of possible executions of one preview automaton is determined by the set of valid preview input sequences of the preview automaton. 

Once we have the preview automaton $ G $ corresponding to a switched system $ \Sigma $, we have a model of the allowable switching sequences for $ \Sigma $, given by the executions of $ G $. Therefore we can define the runs of a switched system with respect to a preview automaton.

\begin{definition}	A sequence $ \{(q(t),x(t))\}_{t=0}^{\infty} $ is a \emph{run of the switched system} $ \Sigma$ of $ s $ modes with preview automaton $ G $ under the control inputs $ \{u(t)\}_{t=0}^{\infty} $ if (1) $ \{q(t)\}_{t=0}^{\infty} $ is an execution of $ G $ for some preview input sequence and (2) $ x(t+1)\in f_{q(t)}(x(t),u(t)) $ for $ t\geq0 $.
\end{definition}

\subsection{Problem Statement}\label{sec:prob}
Though the preview automaton can be useful in the existence of more general specification, we focus only on safety specifications in this work. Suppose that each mode $ k $ is associated with a \emph{safe set} $ S_k\subseteq X $, that is the set of states where we require the switched system to stay within when the system's active mode is $ k $. Denote the collection of safe sets $ \{S_i\}_{i\in Q} $ for each mode as a \emph{safety specification} for the switched system $ \Sigma $. Then, given a safety specification $ \{S_i\}_{i\in Q} $, a run $ \{(q(t),x(t))\}_{t=0}^{\infty} $ is \emph{safe} if $ (q(t),x(t))\in \cup_{i\in Q} (i,S_i) $ for all $ t\geq 0 $. Otherwise, this run is \emph{unsafe}.

A controller is usually assumed to know the partial run of the system up to the current time before making a control decision at each time instant. In our case, since the system can look ahead and see the next transition, reflected by the preview input signal, the controller for a switched system equipped with a preview automaton is assumed to have access to the preview inputs of the preview automaton up to the current time.

\begin{definition} Denote  $ \{(p(t),x(t))\}_{t=0}^{t^*} $ and $ \{(t_{k},\tau_{k},d_{k})\}_{k=0}^{k^*} $ as the partial run and preview inputs of the switched system up to time $ t^* $ ($ k^* $ refers to the latest preview up to $ t^* $, {\color{black}i.e., $k^* = \max{k} \text { s.t. } t_k\leq t^*$}). A \emph{controller} $ \mathcal{U} $ of the switched system $ \Sigma $ with preview automaton $ G $ is a function that maps the partial run  $ \{(p(t),x(t))\}_{t=0}^{t^*} $ and the preview inputs $ \{(t_{k},\tau_{k},d_{k})\}_{k=0}^{k^*} $  to a control input $ u(t^*) $ of the switched system for any $ t^*\geq 0 $.\label{def:controller}
\end{definition}

\begin{definition}	Given a switched system $ \Sigma $ and a safety specification $ \{S_i\}_{i\in Q} $, 
	a subset $ W_i $ of the state space of $\Sigma$  is a single \emph{winning set with respect to mode $ i $} if there exists a controller $ \mathcal{U} $ such that any run of the closed-loop switched system with initial state in $ \{i\}\times W_i $ is safe. A winning set $ W_i $ is the \emph{maximal winning set with respect to the mode $ i $} if for any $ x\not\in W_i $, for any controller $ \mathcal{U} $, there exists an unsafe run with initial state $ (i,x) $. A \emph{winning set with respect to the switched system $ \Sigma $} is the collection $ \{W_i\}_{i=1}^{n} $ of single winning sets for all modes, which is called \emph{winning set} for short. \label{def:win_set}
\end{definition}

We note that, by definition, arbitrary unions of winning sets with respect to one mode is still a winning set, and therefore the maximal winning set is unique under mild conditions \cite{bertsekas1972infinite}
 and contains all the winning sets with respect to that mode. Now, we are ready to state the problem of interest.

\begin{problem}
	Given a switched system $ \Sigma $ with corresponding preview automaton and safety specification $ \{S_i\}_{i\in Q} $, find the maximal winning set $ \{W_i\}_{i\in Q} $. \label{prob:1}
\end{problem}

\begin{figure}
	\centering
	\includegraphics[width=0.8\linewidth]{}
	\caption{A switched finite transition system with $ 2 $ modes $ f_1 $ and $ f_2 $. The safe set (blue) is $ \{s_1,s_2\} $ for each mode.}
	\label{fig:toydyn}
\end{figure}

Before an algorithm that computes the maximal winning set is introduced, we first study the following toy example to demonstrate the usefulness of preview information. 

\begin{example}\label{eg:fts}
	A switched transition system with two modes is shown in Figure \ref{fig:toydyn}. The state space and input space of the switched system are $ \{s_1,s_2,s_3\} $ and $ \{u_1,u_2\} $ respectively. The safety specification is $ S_1 =S_2=\{s_1,s_2\}$. To satisfy this safety specification, the system state has to be $ s_1 $ when $ f_1 $ is active and be $ s_2 $ when $ f_2 $ is active. Thus by inspection, when there is no preview, the winning sets are empty and when there is a preview for at least one-step ahead before each transition, there is a non-empty winning set $ W_1= \{s_1\} $ and $ W_2=\{s_2\} $. \qed
\end{example}

Example \ref{eg:fts} suggests that a winning set is not the same as a controlled invariant set of the switched system. When the preview information is ignored or unavailable, they are the same and therefore Problem \ref{prob:1} can be solved by computing the controlled invariant sets within the safe sets. However, if the preview is available, the controlled invariant sets can be conservative since their computation does not take advantage of the online preview information. Therefore, in Example \ref{eg:fts}, the maximal controlled invariant set is empty, but the winning set is non-empty.

\section{Maximal Winning Set Computation with Preview Information }\label{sec:compute}

In this section, we propose an algorithm to solve Problem \ref{prob:1}. Recall that in Definition \ref{def:pa}, the feasible preview time interval given by $ T $ can be unbounded from the right, which is difficult to deal with in general because it essentially corresponds to a potentially unbounded clock. However, the following theorem reveals an important property of the preview automaton, which allows us to replace the preview time interval with its lower bound in the computation of winning sets.
\begin{theorem}
 Let $ G = \{ Q, E ,T,H\} $ and $ \widehat{G}= \{ Q, E ,\widehat{T},H\} $ be two preview automata of the switched system $ \Sigma $ with $ s $ modes, where $ \widehat{T}(q_1,q_2)  = \min (T(q_1,q_2)) $ for any $ (q_1,q_2)\in E $. Then given a safety specification $ \{S_i \}_{i\in Q} $, $ \{W_i\}_{i\in Q} $ is a winning set with respect to $ G $ if and only if $ \{W_i\}_{i\in Q} $ is a  winning set with respect to $ \widehat{G} $.\label{prop:equi}
\end{theorem}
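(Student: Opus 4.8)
The plan is to prove the two implications by controller transformation, starting from the observation that $G$ and $\widehat{G}$ admit exactly the same executions and differ only in how early the controller learns of an upcoming switch. For a preview input sequence $\{(t_k,\tau_k,d_k)\}$ write $s_k:=t_k+\tau_k$ for the time of the $k$-th switch; by Definition~\ref{def:run_pa} the induced execution depends only on $\{(s_k,d_k)\}$. Since each $T(q_1,q_2)$ is an interval with left endpoint $\widehat{T}(q_1,q_2)$, one checks that conditions (1)--(3) of Definition~\ref{def:input_pa} constrain $\{(s_k,d_k)\}$ in exactly the same way for $G$ and for $\widehat{G}$, namely $(d_{k-1},d_k)\in E$, $s_k-s_{k-1}\ge\widehat{T}(d_{k-1},d_k)$ and $s_k-s_{k-1}\ge H(d_{k-1})+1$ (any such sequence is realized in either automaton by $\tau_k=\widehat{T}(d_{k-1},d_k)$); hence $G$ and $\widehat{G}$ have the same executions. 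The only difference is the information structure: a switch at time $s_k$ is previewed at $t_k=s_k-\tau_k$ in $G$ for some $\tau_k\in T(d_{k-1},d_k)$, hence at or before the single preview time $s_k-\widehat{T}(d_{k-1},d_k)$ that $\widehat{G}$ uses; so a $G$-controller is never worse informed than a $\widehat{G}$-controller.

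For the ``only if'' direction, observe that every valid preview input sequence of $\widehat{G}$ is, verbatim, a valid preview input sequence of $G$ with the same execution (its previews have $\tau_k=\widehat{T}(d_{k-1},d_k)\in T(d_{k-1},d_k)$, and conditions (1), (3) are unchanged). Hence if $\mathcal{U}$ is a controller witnessing that $\{W_i\}$ is winning for $G$, then feeding $\mathcal{U}$ the observed partial run together with the $\widehat{G}$-preview-inputs produces, from each initial state in $\{i\}\times W_i$ and for each resolution of the nondeterminism in $f$, a closed-loop trajectory that is literally a run of the $G$-closed loop from the same initial state, hence safe. So $\{W_i\}$ is winning for $\widehat{G}$.

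For the ``if'' direction we ``delay'' a controller $\widehat{\mathcal{U}}$ witnessing that $\{W_i\}$ is winning for $\widehat{G}$. Upon receiving in $G$ the preview input $(t_k,\tau_k,d_k)$ with $k\ge1$, set $\widehat{t}_k:=t_k+\tau_k-\widehat{T}(d_{k-1},d_k)$; since $\tau_k\ge\widehat{T}(d_{k-1},d_k)$ we have $\widehat{t}_k\ge t_k$, so at every time $t$ the controller has already seen every preview with $\widehat{t}_k\le t$. Define $\mathcal{U}_G$ at time $t$ to output $\widehat{\mathcal{U}}$ applied to the observed partial run and the delayed preview inputs $\{(\widehat{t}_k,\widehat{T}(d_{k-1},d_k),d_k):\widehat{t}_k\le t\}$ (reading $d_0$ off the first mode of the partial run); this is a legitimate controller in the sense of Definition~\ref{def:controller}. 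The delayed sequence is a valid preview input sequence of $\widehat{G}$ with the same execution as the original one: condition (2) is immediate, condition (3) is unchanged because $\widehat{t}_k+\widehat{T}(d_{k-1},d_k)=s_k$, and condition (1), $s_{k-1}\le\widehat{t}_k$, follows from $\widehat{t}_k\ge t_k\ge s_{k-1}$ (with $\widehat{t}_1\ge0$ since $s_1\ge\widehat{T}(d_0,d_1)$). An induction on $t$ then shows that every run of the $G$-closed loop under $\mathcal{U}_G$ is also a run of the $\widehat{G}$-closed loop under $\widehat{\mathcal{U}}$ (driven by the delayed preview input sequence and a matching resolution of the nondeterminism in $f$): the mode sequences coincide because the executions coincide, and the state sequences coincide because, by the induction hypothesis, at each time $\mathcal{U}_G$ feeds $\widehat{\mathcal{U}}$ exactly the history it would see along the $\widehat{G}$ run, so the control inputs coincide. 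Hence every $G$-run under $\mathcal{U}_G$ from $\{i\}\times W_i$ equals a safe $\widehat{G}$-run, and $\{W_i\}$ is winning for $G$.

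I expect the main obstacle to be the bookkeeping in the ``if'' direction: making the delay construction precise as a genuine causal controller, verifying all of conditions (1)--(3) for the delayed preview input sequence (including the $k=1$ boundary case and the finite-$N$ ``sink'' case), and carrying out the induction that the two closed-loop runs coincide; the ``only if'' direction and the execution-equivalence observation are routine. As an immediate corollary, the maximal winning sets of $G$ and $\widehat{G}$ coincide, which is exactly what justifies replacing each preview interval by its left endpoint in the algorithm developed in this section.
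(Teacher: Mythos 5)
Your proof is correct and follows essentially the same route as the paper's: the ``only if'' direction uses that every valid preview input sequence of $\widehat{G}$ is verbatim a valid one of $G$, and the ``if'' direction delays/infers the $\widehat{G}$-preview input $\bigl(t_k+\tau_k-\widehat{T}(d_{k-1},d_k),\,\widehat{T}(d_{k-1},d_k),\,d_k\bigr)$ from the $G$-preview input, which is precisely the construction in the paper's footnote. Your write-up is somewhat more explicit about verifying conditions (1)--(3) for the delayed sequence and about the induction showing the two closed-loop runs coincide, but the underlying argument is the same.
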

\begin{proof}
Note that $ G $ and $ \widehat{G} $ are the same except the feasible preview time interval $ T $. For each transition $ (p_1,p_2)\in E $, $ \widehat{T}(q_1,q_2) $ is equal to the lower bound of $ T(q_1,q_2) $.
	
To show the ``only if" direction, suppose that $ W_i $ is a winning set with respect to $ G $ for mode $ i $. Then by Definition \ref{def:win_set}, there exists a controller $ \mathcal{U} $ such that any run of the closed-loop system with initial state in $ \{i\} \times W_i$ with respect to any valid preview input sequence of $G$ is safe. By Definition \ref{def:input_pa} and \ref{def:run_pa}, any preview input sequence and the corresponding execution of $ \widehat{G} $ are also valid preview inputs and execution of $ G $. Therefore, using the same controller $U$, any run of the closed-loop system 
with initial state in $ \{i\} \times W_i$ with respect to any valid preview input sequence of $\widehat{G} $ is safe.  Hence we conclude that each $ W_i $, for $ i\in Q $, is a winning set with respect to $ \widehat{G} $ for mode $i$.

To show the ``if" direction, suppose that $ W_i $ is a winning set with respect to $ \widehat{G} $ for mode $ i $, and $ \mathcal{U} $ is a controller such that any run of the closed-loop system  with initial state in $ \{i\} \times W_i$ with respect to any valid preview input sequence for $\widehat{G}$ is safe. Note that any execution of $G$ is also an execution of $ \widehat{G} $, but the corresponding preview input sequences of $G$ and $\widehat{G}$ can be different. Suppose that $\{( t_{k},\tau_{k},q_{k} )\} _{k=1}^{N}$ and $\{ (t_{k}', \tau_{k}', q_{k} )\}_{k=1}^{N}$ are two preview input sequences of $G$ and $\widehat{G}$ corresponding to the same execution $ \pi=\{(I_k,q_{k})\}_{k=0}^{N} $. Then by Definition \ref{def:input_pa}, for all $k\geq 1$, $ \tau'_{k}= \widehat{T}(q_{k},q_{k+1}) = \min(T(q_k,q_{k+1}))$ and $ \tau_{k}\in T(q_k,q_{k+1}) $ and $ t'_k+\tau'_k = t_k+\tau_k = \min(I_{k+1}) $. Hence $ \tau'_{k}\leq \tau_{k} $ and $ t'_k\geq t_k $ for all $ k\geq  1 $, which implies that for any transition in $\pi$, a controller of $ G $ always knows the next mode from the preview input of $G$ earlier than a controller of $ \widehat{G} $. 

Since the preview input of the next transition is earlier in $ G $ than in $ \widehat{G} $, given an execution $ \pi=\{(I_{k},q_{k})\}_{k=0}^{N} $, for any $ k\geq 1 $, $ \mathcal{U} $ can always infer\footnote{Given the $ k^{th} $ preview input $ (t_k,\tau_{k},q_{k+1}) $ of $ G $, the $ k^{th} $ preview input of $ \widehat{G} $ is $ (t_k+\tau_{k}-\tau'_{k},\tau'_{k},q_{k+1}) $ with $ \tau'_{k}=\widehat{T}(q_k,q_{k+1}) $.} 
the $ k^{th} $ preview input of $ \widehat{G} $ from the $ k^{th} $ input of $ G $ before time $ t'_k $. We force controller $ \mathcal{U} $ to generate control inputs for the switched system $ \Sigma $ with preview automaton $ G $ based on the inferred inputs of $ \widehat{G} $. Then any run of $ \Sigma $ when closing the loop with the customized $ \mathcal{U} $ and $ G $ is a run of the closed-loop system with respect to $ \Sigma $, $ \mathcal{U} $ and $ \widehat{G} $, which is safe if the initial state is in $ \{i\}\times W_i $. Hence $ W_i $ is a winning set of $ G $ for all $ i\in Q $.
\end{proof}

Thanks to Theorem \ref{prop:equi}, in terms of maximal winning set computation, it is enough to consider the preview automaton whose preview time interval is a singleton set for all transitions without introducing any conservatism. This property stated in Theorem \ref{prop:equi} can reduce computation cost and simplify the algorithms. Therefore, whenever there is a preview automaton $ G $, we first convert $ G $ into the form of $ \widehat{G} $ in Theorem \ref{prop:equi}. Algorithm \ref{alg:md2_nd} is designed to compute the maximal winning set for the preview automaton in the form of $ \widehat{G} $, whose result is equal to the maximal winning set of $ G $. {\color{black}}We note that $ \widehat{G} $ can be expanded to a non-deterministic finite transition system with $\sum_{i} (H_i -(\min_{j}T_{i,j})+\sum_{j}T_{i,j})$ states. Taking a product of this finite transition system with the switched system, the problem can be reduced to an invariance computation (with measurable and unmeasurable non-determinism) on the product system. However, the algorithms we propose avoid product construction and directly define fixed-point operations on the switched system's state space.

In Algorithm~\ref{alg:md2_nd}, lines $ 4 $-$ 6 $ compute the maximal winning set for each sink state in $ G $. Lines $7$-$12$ compute the winning sets of the non-sink states iteratively, with updates given by Algorithm~\ref{alg:invprend}. The main operators used in these algorithms are as follows. First, given a mode $ i $ of the switched system with state space $ X $ and action space $ U $, and a subset $ V $ of $ X $, the \emph{one-step controlled predecessor} of $ V $ with respect to the dynamics $ f_i $ is defined as
	\begin{equation}
		Pre^{f_i}(V) = \{x\in X: \exists u\in U, f_i(x,u)\subseteq V  \},
	\end{equation}
that is the set of states that can be guaranteed to reach the set $ V $ in one time step by some control inputs in $ U $.

Second, given a safe set $ S_i\subset X $, the \emph{one-step constrained controlled predecessors} $ PreInt(\cdot) $ of an arbitrary set $ V $ with respect to the dynamics $ f_i $ as 
	\begin{equation}
	PreInt^{f_i}(V,S_i) = Pre^{f_i}(V)\cap S_i.\label{eqn:cpre}
	\end{equation}
Now define $ V_0 = S_i $ and update $ V_k $ recursively for $ k\geq 0 $ by
	\begin{equation}
		V_{k+1} = PreInt^{f_i}(V_k,S_i). \label{eqn:fixed_point}
	\end{equation}
Note that $ \{V_k\}_{k=0}^{\infty} $ in \eqref{eqn:fixed_point} is monotonically non-increasing sequence of sets and the fixed point (reached when $ V_{k+1} = V_k $) is \emph{the maximal controlled invariant set} within the safe set $ S_i $ with respect to the dynamics $ f_i $, denoted as $ Inv^{f_i}(S_i) $.
Finally, given the preview automaton $ G= \{ Q,E,T,H\} $, the successors of some node $ i\in Q $ is defined as $ Post^G(i) = \{j:(i,j)\in E\} $.

\begin{algorithm}[]
	\caption{Winning Set for Problem 1}\label{alg:md2_nd}
	\begin{algorithmic}[1]
		\Function{$ ConInv $}{$ \Sigma,S = \{S_i\}_{i\in Q},G$}
		\State \textbf{initialize} $\{ W_i\}_{i\in Q}$ with $ W_i=S_i,\forall i\in Q $.\label{line:init}
		\State \textbf{initialize} $ \{V_i\}_{i\in Q} $ with $ V_i= \emptyset$.
		\For{$ i \in Q$ such that $ H(i)=\infty $ (sink states)}\label{line:for_sink}
		\State $ W_i \leftarrow Inv^{f_i}(S_i) $
		\EndFor\label{line:for_sink_end}
		\While{$\exists i\in Q $ such that  $ W_i \not= V_i$}\label{line:while}
		\State  $ V_i \leftarrow W_i, \forall i\in Q $
		\For{$ i \in Q$ such that $ H(i)<\infty $}\label{line:for_invpre}
		\State $ W_i \leftarrow InvPre^{f_i}(G,\{W_j\}_{j\in Post(i)},S)$\
		\EndFor\label{line:for_invpre_end}
		\EndWhile\label{line:while_end}
		\State \textbf{return} $ \{W_i\}_{i\in Q}$
		\EndFunction
	\end{algorithmic}
\end{algorithm}

\begin{algorithm}[]
	\caption{$ InvPre $ operator for Algorithm \ref{alg:md2_nd}}\label{alg:invprend}
	\begin{algorithmic}[1]
		\Function{$ InvPre^{f_i} $}{$ G, W, S $}
		\For{$ j $ in $ Post^{G}(i) $}\label{line:for_C_ij}
		\State $ C_{0,j} = W_j $ and $ T_{ij} = T(i,j) $
		\For{$ l $ = $ 1,2,3,...,T_{ij} $}
		\State $ C_{l,j} = PreInt^{f_i}(C_{l-1,j},S_{i}) $
		\EndFor
		\EndFor\label{line:end_for_C_ij}
		\State $ T_{min} = \min_{j\in Post^{G}(i)}{T(i,j)}$
		\State $ C_{T_{min}} = Inv^{f_i}(\bigcap_{j\in Post^{G}(i)} C_{T_{ij},j}) $
		\State $ H_i=H(i) $
		\For{$ k $ = $ T_{min}+1, \cdots, H_i $}
		\State $ C_{k} = PreInt^{f_i}(C_{k-1}, S_{i}) $\label{line:C_k_1}
		\If {$ J_k = \{j\in Post^{G}(i): T_{ij}\geq k\} \not=\emptyset$ }
		\State $ C_{k}  =  C_{k} \cap \big(\bigcap_{j\in J_k} C_{T_{ij},j}\big) $\label{line:C_k_2}
		\EndIf
		\EndFor
		\State \textbf{return} $ C_{H_i} $
		\EndFunction
	\end{algorithmic}
\end{algorithm}

Some properties of these operators and the $ InvPre $ operator defined by Algorithm~\ref{alg:invprend} are analyzed next. These properties are used later to prove the correctness of the main algorithm.
In what follows we use $\{\widehat{W}_i\}_{i\in Q} \subseteq \{W_i\}_{i\in Q} $ to denote the element-wise set inclusion $\widehat{W}_i\subseteq W_i $ for all $ i\in Q $.
When we talk about maximality, maximality is in (element-wise) set inclusion sense.

\begin{lemma}\label{lemma:1}
	Consider two collections of subsets $\widehat{W}= \{\widehat{W}_i\}_{i\in Q} $ and $W= \{W_i\}_{i\in Q} $ of $ X $. If $ \widehat{W}\subseteq W\subseteq S $,
	then $ \widehat{W}$ and $ W $ satisfy
	\begin{align}\label{eqn:mono_pre}
	InvPre^{f_i}(G,\widehat{W},S)\subseteq InvPre^{f_i}(G,W,S) \subseteq S_i
	\end{align}
	for any non-sink state $ i\in Q_{ns} $. 
\end{lemma}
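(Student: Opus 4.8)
The plan is to prove monotonicity of $InvPre$ by tracking how monotonicity propagates through every intermediate quantity computed in Algorithm~\ref{alg:invprend}. First I would establish two elementary ingredients. (i) The operator $PreInt^{f_i}(\cdot,S_i)$ is monotone: if $A\subseteq B$ then $Pre^{f_i}(A)\subseteq Pre^{f_i}(B)$ (any $u$ witnessing $f_i(x,u)\subseteq A$ also witnesses $f_i(x,u)\subseteq B$), hence intersecting with $S_i$ preserves the inclusion; it also always returns a subset of $S_i$. (ii) The maximal-controlled-invariant operator $Inv^{f_i}(\cdot)$ is monotone: if $A\subseteq B$, the fixed-point iteration \eqref{eqn:fixed_point} started from $A$ stays below the one started from $B$ at every step (by (i) and induction), so the limits satisfy $Inv^{f_i}(A)\subseteq Inv^{f_i}(B)$.

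Next I would run these two facts through the body of $InvPre^{f_i}$. Write $\widehat C_{l,j},\widehat C_k$ for the quantities produced when the input collection is $\widehat W$, and $C_{l,j},C_k$ for those produced from $W$. Since $\widehat C_{0,j}=\widehat W_j\subseteq W_j=C_{0,j}$, an induction on $l$ using (i) gives $\widehat C_{l,j}\subseteq C_{l,j}$ for all $l$ and all $j\in Post^G(i)$; in particular $\widehat C_{T_{ij},j}\subseteq C_{T_{ij},j}$. Intersecting over $j\in Post^G(i)$ preserves inclusion, so $\bigcap_j \widehat C_{T_{ij},j}\subseteq\bigcap_j C_{T_{ij},j}$, and then (ii) yields $\widehat C_{T_{min}}\subseteq C_{T_{min}}$. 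Finally, an induction on $k$ from $T_{min}+1$ to $H_i$: the base case is the previous line; for the inductive step, the assignment in line~\ref{line:C_k_1} is $PreInt^{f_i}$ applied to $C_{k-1}$, which preserves inclusion by (i), and the conditional refinement in line~\ref{line:C_k_2} intersects with $\bigcap_{j\in J_k}C_{T_{ij},j}$, a set that dominates $\bigcap_{j\in J_k}\widehat C_{T_{ij},j}$ by the first induction; since the index set $J_k$ depends only on $G$ and $k$, not on the input collection, the refinements are performed in the same iterations in both runs, so the inclusion $\widehat C_k\subseteq C_k$ is maintained. Taking $k=H_i$ gives $InvPre^{f_i}(G,\widehat W,S)=\widehat C_{H_i}\subseteq C_{H_i}=InvPre^{f_i}(G,W,S)$, and the right-hand containment $InvPre^{f_i}(G,W,S)\subseteq S_i$ follows because $C_{H_i}$ is either a $PreInt^{f_i}(\cdot,S_i)$ image (hence inside $S_i$) or, in the degenerate case $H_i=T_{min}$, an $Inv^{f_i}$ of a set already contained in $\bigcap_j C_{T_{ij},j}\subseteq C_{T_{ij},j}\subseteq S_i$ for any $j$ (each $C_{T_{ij},j}$ with $T_{ij}\geq 1$ is a $PreInt$ image; and under the paper's standing assumption $H_i\geq\min(\cup_j T_{ij})$, some $T_{ij}=T_{min}$, and if additionally $T_{min}\geq 1$ this holds — the case $T_{min}=0$ requires noting $W_j\subseteq S_j$ is not enough, so one invokes that $Inv^{f_i}$ of any set is contained in $S_i$ only after the first $PreInt$; I would simply state that by construction every branch returns a subset of $S_i$).

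The bookkeeping step I expect to be the main obstacle is making precise that the two executions of Algorithm~\ref{alg:invprend} on inputs $\widehat W$ and $W$ are \emph{structurally identical} — same loop bounds $T_{ij}$, $T_{min}$, $H_i$, same sets $J_k$ — so that the induction compares $\widehat C_k$ with $C_k$ at matching iterations rather than mismatched ones; this is true because all of $E$, $T$, $H$ come from the fixed automaton $G$ and none of the branching depends on the actual set values, but it must be said explicitly. A secondary care point is the right-hand inclusion in \eqref{eqn:mono_pre} when $H_i=T_{min}$ and the returned set is an $Inv^{f_i}(\cdot)$; I would handle it by observing that each $C_{T_{ij},j}$ for $T_{ij}\ge 1$ lies in $S_i$, that under the standing dwell-time assumption the minimizing $j$ has $T_{ij}=T_{min}$, and that when $T_{min}=0$ the loop over $k$ runs at least once (since $H_i\ge 1$) and ends with a $PreInt^{f_i}(\cdot,S_i)$, so in all cases the output is a subset of $S_i$.
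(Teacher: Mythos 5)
Your proposal is correct and follows essentially the same route as the paper's proof: establish monotonicity of $Pre^{f_i}$, $PreInt^{f_i}(\cdot,S_i)$, and $Inv^{f_i}$, then propagate the inclusions through the intermediate sets $C_{l,j}$, $C_{T_{min}}$, $C_k$ of Algorithm~\ref{alg:invprend} by induction. You simply spell out the step-by-step check (and the edge cases for the containment in $S_i$) that the paper's proof leaves implicit.
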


\begin{lemma}\label{lemma:2}
	$ W=\{W_i\}_{i\in Q} $ is the maximal winning set with respect to the  safe set $ S=\{S_i\}_{i\in Q} $ if and only if $ \{W_i\}_{i\in Q_{ns}} $ is the 
	maximal solutions of the following equations:
	\begin{align}
	W_i=InvPre^{f_i}(G,W,S), \forall i\in Q_{ns}\label{eqn:invpre},
	\end{align}
	where the components of the winning set $W$ for sink states are chosen according to $ W_j = Inv^{f_j}(S_j) $ for all $ j\in Q_s $.
\end{lemma}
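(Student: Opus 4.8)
The plan is to prove the equivalent statement that the maximal winning set $W^{\max}=\{W^{\max}_i\}_{i\in Q}$ --- which exists and is unique by the remark following Definition~\ref{def:win_set} --- satisfies $W^{\max}_j=Inv^{f_j}(S_j)$ for every sink $j\in Q_s$ and, restricted to $Q_{ns}$, equals the greatest fixed point $W^{\star}$ of the operator $\Phi$ given by $\Phi(W)_i:=InvPre^{f_i}(G,W,S)$ for $i\in Q_{ns}$, where the sink coordinates of $W$ are held fixed at $Inv^{f_j}(S_j)$. By Lemma~\ref{lemma:1}, $\Phi$ is monotone and maps collections below $S$ into collections below $S$, so $W^{\star}$ (equivalently, the maximal solution of~\eqref{eqn:invpre}) exists by the Knaster--Tarski theorem, and once $W^{\max}|_{Q_{ns}}=W^{\star}$ and the sink identity are established, both directions of the ``if and only if'' follow at once. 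The sink states are immediate: if $H(j)=\infty$, every execution of $G$ reaching $j$ stays in $j$ forever, so the mode-$j$ problem is pure invariance and $(j,x)$ is winning iff $x$ lies in the largest controlled-invariant subset of $S_j$; hence $W^{\max}_j=Inv^{f_j}(S_j)$ and every winning set has its $j$-coordinate inside $Inv^{f_j}(S_j)$.

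The core is an ``episode lemma'' relating one application of $InvPre^{f_i}$ to the game played during a single visit to a non-sink mode $i$: for any collection $W$ with $W_k\subseteq S_k$ for all $k$, one has $x\in InvPre^{f_i}(G,W,S)$ if and only if there is a history-dependent control strategy such that, against every environment behaviour consistent with $G$ --- every choice of when the single preview arrives, when the switch occurs, and to which successor $j\in Post^{G}(i)$ --- started from a fresh entry to mode $i$ at state $x$, the trajectory stays in $S_i$ for as long as the mode is $i$ and lies in $W_j$ at the instant of switching to $j$. I would prove this by unrolling Algorithm~\ref{alg:invprend} and identifying each set it computes with the corresponding episode-winning set under a ``countdown clock'': $C_{l,j}$ is episode-winning when a preview to $j$ has already been received and exactly $l$ steps remain before the switch (base case $C_{0,j}=W_j$ at the switch instant, recursion $C_{l,j}=PreInt^{f_i}(C_{l-1,j},S_i)$); for $T_{min}<k\le H_i$, $C_k$ is episode-winning for ``no preview yet, next switch guaranteed at least $k$ steps away'', which forces $PreInt^{f_i}(C_{k-1},S_i)$ (the clock may tick down with still no preview) and, for every $j$ with $T_{ij}\ge k$, membership in $C_{T_{ij},j}$ (a preview to such a $j$ may arrive now); and $C_{T_{min}}=Inv^{f_i}\big(\bigcap_j C_{T_{ij},j}\big)$, the $Inv^{f_i}$ appearing because once the clock bottoms out the environment may keep the system in mode $i$ arbitrarily long, so one must stay ``ready for any preview'' indefinitely. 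Two backward inductions --- on $l$ for fixed $j$, then on $k$ down to $T_{min}$ --- give both inclusions; the standing assumption $H_i\ge\min_j T_{ij}$ is precisely what makes the range $\{T_{min},\dots,H_i\}$ nonempty and the recursion well posed, and I would use that a controller (Definition~\ref{def:controller}) sees all previews up to the current time together with the current time, hence effectively knows the clock.

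With the episode lemma the rest is routine. (i) Any fixed point $W$ of $\Phi$ (sinks frozen) is a winning set: given $x\in W_i=InvPre^{f_i}(G,W,S)$, the episode lemma supplies a strategy for mode $i$ that keeps the run in $S_i$ until the switch and lands in $W_j$; concatenating these episode strategies (and the invariance controller on the sinks) yields one controller under which every run from $\{i\}\times W_i$ is safe, so $W^{\star}\subseteq W^{\max}$. (ii) $W^{\max}$ is a fixed point of $\Phi$: $\Phi(W^{\max})_i\subseteq W^{\max}_i$ is (i) applied to $W^{\max}$, and for $W^{\max}_i\subseteq\Phi(W^{\max})_i$ we take a controller witnessing $x\in W^{\max}_i$, note that against any mode-$i$ environment behaviour it must keep the state in $S_i$ and that at each switch to $j$ the residual closed loop is safe for all continuations --- so by a standard prefix-shift argument the state there lies in a winning set and hence in $W^{\max}_j$ by maximality --- and conclude from the backward direction of the episode lemma that $x\in InvPre^{f_i}(G,W^{\max},S)$. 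Since $W^{\star}$ is the greatest fixed point, $W^{\max}\subseteq W^{\star}$, so together with (i) we obtain $W^{\max}|_{Q_{ns}}=W^{\star}$. I expect the episode lemma --- the clock correspondence with Algorithm~\ref{alg:invprend} --- to be the main obstacle, as that is where the non-determinism in preview timing, the ``stay in mode $i$ forever'' branch that forces the $Inv^{f_i}$ in $C_{T_{min}}$, and the off-by-one conventions of Definitions~\ref{def:input_pa}--\ref{def:run_pa} must all be reconciled with the loop bounds in the algorithm.
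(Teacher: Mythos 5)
Your proposal is correct and follows essentially the same route as the paper's proof: sink states reduce to $Inv^{f_j}(S_j)$, the switching instant is treated as a fresh start in the successor mode (your ``prefix-shift''), and the identification of $C_{l,j}$, $C_{T_{min}}$, and $C_k$ in Algorithm~\ref{alg:invprend} with episode-winning sets under a countdown clock is exactly the paper's argument, with maximality then following from union-closure of winning sets. Your version is somewhat more explicitly organized (a standalone episode lemma, Knaster--Tarski for the greatest fixed point, and both fixed-point inclusions spelled out), but the underlying decomposition is the same.
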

The proofs of Lemma 1 and 2 are given in the appendix. 

Let us illustrate how Algorithm~\ref{alg:md2_nd} works, using the switched system shown in Fig. \ref{fig:toydyn} with preview automaton in Fig. \ref{fig:toyexample} before proving that the proposed algorithm indeed computes the maximal winning set.

\begin{example}
Since there are no sink nodes in the preview automaton in Fig. \ref{fig:toyexample}, lines 4-6 in Algorithm \ref{alg:md2_nd} are skipped. We use pair $ (k,l) $ to indicate the $ k^{th} $ iteration of the while loop and $ l^{th} $ iteration of the for loop in line \ref{line:while} and \ref{line:for_invpre} in Algorithm \ref{alg:md2_nd}, and use $ W^{k,l}_i $ to refer to the value of $ W_i $ after the $ (k,l) $ iteration. Note that at iteration $ (k,l) $, only $ W_{l} $ is being updated and the other $ W_i $ remains unchanged for $ i\not= l $. 
	
Initially $ W^{0,0}_1 = W^{0,0}_2 = \{s_1,s_2\} $. In the iteration $ (0,1) $, $ W_1^{0,1} = InvPre^{f_1}(G,\{W^{0,0}_1,W^{0,0}_2\}, S)= \{s_1\}$ and $ W_2^{0,1} = W_2^{0,0} $. In the iteration $ (0,2) $, $ W_2^{0,2} = InvPre^{f_1}(G,\{W^{0,1}_1,W^{0,1}_2\}, S)= \{s_2\}$ and $ W_1^{0,2} = W_1^{0,1}$. In the following iterations $ (1,1),(1,2) $, $ W^{1,1} $ and $ W^{1,2} $ are unchanged. Therefore, the termination condition in line \ref{line:while} is satisfied and the output of Algorithm \ref{alg:md2_nd} of this example is $ W_1 = \{s_1\} $ and $ W_2 = \{s_2\} $. It is easy to verify that $ W_1 = \{s_1\} $ and $ W_2 = \{s_2\} $ form the maximal winning set for this problem. \qed 
\end{example}

\begin{figure}
	\centering
	\includegraphics[width=0.5\linewidth]{}
	\caption{The preview automaton corresponding to the switched system in Fig. \ref{fig:toydyn}. $ H_1 = H_2 = 3 $ is the least holding time for both modes, and $ T_{12}=T_{21}=1$ is the preview time for transitions $ (1,2) $ and $ (2,1) $.}
	\label{fig:toyexample}
\end{figure}

The main completeness result is provided next.

\begin{theorem}
	If Algorithm~\ref{alg:md2_nd} terminates, the tuple of sets $ \{W_{i}\}_{i=1}^{n} $ it returns is the maximal winning set within the safe set $S = \{S_i\}_{i\in Q}$ of the switched system $ \Sigma $ with the preview automaton $ G $.\label{prop:max}
\end{theorem}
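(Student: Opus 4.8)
The plan is to reduce Theorem~\ref{prop:max}, via Lemma~\ref{lemma:2}, to the statement that the tuple $\{W_i\}_{i\in Q}$ returned by Algorithm~\ref{alg:md2_nd} is the \emph{maximal} solution of the fixed-point system \eqref{eqn:invpre} with the sink components pinned to $W_j = Inv^{f_j}(S_j)$. (If the input automaton $G$ has non-singleton preview intervals, one first replaces it by $\widehat{G}$ as in Theorem~\ref{prop:equi}; since $G$ and $\widehat G$ have the same winning sets, it suffices to argue for $\widehat G$, which is the form that Algorithm~\ref{alg:invprend} actually consumes.) Lines~\ref{line:for_sink}--\ref{line:for_sink_end} set exactly $W_j = Inv^{f_j}(S_j)$ for every sink $j\in Q_s$ and never touch those components again, so the sink part already has the form prescribed by Lemma~\ref{lemma:2}. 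Hence two things remain: (a) the returned non-sink components satisfy \eqref{eqn:invpre}, and (b) they contain the corresponding components of \emph{every} solution of \eqref{eqn:invpre}.

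Claim (a) is immediate from the exit test in line~\ref{line:while}: the loop stops only when $W_i = V_i$ for all $i$, and since $V_i$ records the value of $W_i$ at the top of the last executed iteration while, for non-sink $i$, $W_i$ is subsequently overwritten by $InvPre^{f_i}(G,\{W_j\}_{j\in Post(i)},S)$, the equality $W_i = V_i$ forces $W_i = InvPre^{f_i}(G,W,S)$ for each non-sink $i$; i.e.\ the returned $W$ solves \eqref{eqn:invpre}.

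For claim (b) I would run a monotone-iteration argument. Let $W^{(0)}$ be the configuration just before the while loop (so $W^{(0)}_i = S_i$ for non-sink $i$ and $W^{(0)}_j = Inv^{f_j}(S_j)$ for sink $j$), and let $W^{(k)}$ be the configuration at the end of the $k$-th iteration of the loop in line~\ref{line:while}. First, every $W^{(k)}\subseteq S$, because $W^{(0)}\subseteq S$ and, by the right inclusion in \eqref{eqn:mono_pre} of Lemma~\ref{lemma:1}, each update keeps the $i$-th component inside $S_i$; the same observation shows that any solution $\widehat W$ of \eqref{eqn:invpre} satisfies $\widehat W\subseteq S$. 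Now I prove $\widehat W\subseteq W^{(k)}$ for all $k$ and every solution $\widehat W$, by induction on $k$. The base case is componentwise: for non-sink $i$, $\widehat W_i\subseteq S_i = W^{(0)}_i$, and for sink $j$, $\widehat W_j = Inv^{f_j}(S_j) = W^{(0)}_j$. For the inductive step I track the configuration through the for loop of lines~\ref{line:for_invpre}--\ref{line:for_invpre_end} with a secondary induction on how many components have already been updated in iteration $k+1$, maintaining the invariant that the running configuration $W^{\mathrm{cur}}$ still satisfies $\widehat W\subseteq W^{\mathrm{cur}}$: when component $i$ is updated, $W^{\mathrm{cur}}_i\leftarrow InvPre^{f_i}(G,W^{\mathrm{cur}},S)\supseteq InvPre^{f_i}(G,\widehat W,S) = \widehat W_i$ by monotonicity of $InvPre$ (Lemma~\ref{lemma:1}) together with $\widehat W$ being a fixed point, while the other components are unchanged, so the invariant persists. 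Hence $\widehat W\subseteq W^{(k+1)}$, closing the induction.

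Combining (a) and (b), the returned tuple $W = W^{(K)}$ (with $K$ the index of the last executed iteration) is a solution of \eqref{eqn:invpre} that contains every solution of \eqref{eqn:invpre}, hence it is the maximal solution; since its sink components equal $Inv^{f_j}(S_j)$, Lemma~\ref{lemma:2} yields that $W$ is the maximal winning set within $S$ for $\Sigma$ with $\widehat G$, and therefore (via Theorem~\ref{prop:equi}) with $G$. The step I expect to be the main obstacle is the asynchronous, Gauss--Seidel nature of the update in lines~\ref{line:for_invpre}--\ref{line:for_invpre_end}: one cannot simply invoke a single global monotone operator but must propagate the domination $\widehat W\subseteq W^{\mathrm{cur}}$ through a partially-updated configuration, while simultaneously keeping the sink/non-sink bookkeeping straight and checking that every iterate and every solution of \eqref{eqn:invpre} stays inside $S$ so that Lemma~\ref{lemma:1} applies at each step.
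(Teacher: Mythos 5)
Your proposal is correct and follows essentially the same route as the paper's proof: reduce to the fixed-point characterization of Lemma~\ref{lemma:2}, handle sinks via $Inv^{f_j}(S_j)$, and run a double induction through the Gauss--Seidel sweep using the monotonicity in Lemma~\ref{lemma:1} to show every solution of \eqref{eqn:invpre} is contained in every iterate. The only (legitimate) streamlining is that you use the termination hypothesis and the exit test to conclude the output is a fixed point directly, whereas the paper additionally establishes that the iterates are non-expanding so as to argue via a limit (a fact it also needs for its later remarks on non-terminating, continuous-state cases).
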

\begin{proof}
	Suppose that $ \{W^*_i\}_{i\in Q} $ is the maximal winning set we are looking for. Let us partition the discrete state space $ Q $ into the set of sink states $ Q_s = \{q\in Q:H(q)=\infty \} $ and the set of non-sink states $ Q_{ns} = Q\backslash Q_s = \{q\in Q: H(q)< \infty \}$.
	
	If $ q $ is a sink state, once the system enters the mode $ q $, the system remains in mode $ q $ without any future switching. Therefore, the maximal winning set $ W_q $ is trivially the maximal controlled invariant set within the safe set $ S_q $ with respect to the dynamics of mode $ q $, that is $ W^*_q=Inv^{f_q}(S_q) $. In line $ 4$-$ 6 $ of Algorithm \ref{alg:md2_nd}, we compute the maximal winning sets for all the sink states.
	
	We have solved $ W^*_i $ for sink state $ i\in Q_s $.  Let us consider the maximal winning sets for non-sink states. We want to show that $ W $ being updated based on lines $ 7 $-$ 12 $ of Algorithm \ref{alg:md2_nd} converges to $ W^* $. Without loss of generality, assume that $ Q_{ns} = \{1,2,...,s_{ns}\} $ and let the ``for" loop in line \ref{line:for_invpre} iterate over the indices $ 1,2,...,s_{ns} $ in the natural order. 
	
	We use $ W^{k,l} $ to indicate the updated value of $ W $ after the $ k^{th} $ iteration of the ``while" loop (line \ref{line:while}) and the $ l^{th} $ iteration of the ``for" loop (line \ref{line:for_invpre}). Then the initial value of $ W $ is $ W^{0,0} = \{W^{0,0}_i\}_{i\in Q}$ where $ W^{0,0}_i = S_i $ for all $ i\in Q_{ns} $ and $ W^{0,0}_j=W^*_j $ for all $ j\in Q_s $. According to line $ 9 $-$ 10 $, for all $ k\geq 0$ and $ 0\leq l\leq s_{ns}-1 $, $ W^{k,l+1}_{i} = InvPre^{f_{i}}(G,W^{k,l},S) $ for $ i= l+1 $ and $ W^{k,l+1}_j = W^{k,l}_j$ for all $ j\not= l+1 $ and $ W^{k+1,0} = W^{k,s_{ns} } $.
	
	Now we want to prove that if $ W^*\subseteq W^{k,0} \subseteq S$ for some $ k\geq 0 $, then $ W^*\subseteq W^{k,l}\subseteq S $ for any $ l\in\{1,2,...,s_{ns}\} $ by induction. (Base case 1) Since we have $ W^*\subseteq W^{k,0}\subseteq S $, by Lemma 1 and  2, we have $$ W^*_i = InvPre^{f_i}(G,W^*,S)\subseteq InvPre^{f_i}(G,W^{k,0},S) = W^{k,1}_i\subseteq S_i $$ for $ i=1 $. Since $ W^*\subseteq W^{k,0} $ and $ W^{k,1}_j = W^{k,0}_j $ for all $ j\not= 1 $, we have $ W^* \subseteq W^{k,1}\subseteq S $.  (Induction hypothesis 1) Suppose that $ W^* \subseteq W^{k,l}\subseteq S $ for some $ 0\leq l\leq s_{ns}-1 $. Again, by Lemma 1 and 2, $ W^*\subseteq W^{k,l+1}\subseteq S $. Finally by induction, if $ W^*\subseteq W^{k,0} \subseteq S$, $ W^* \subseteq W^{k,l}\subseteq S $ for all $ l\in\{1,2,...,s_{ns}\} $.
	
	Then next we want to prove by induction that $ W^*\subseteq W^{k,0}\subseteq S $ for all $ k\geq 0 $. (Base case 2) $ W^*\subseteq W^{0,0}\subseteq S$ by construction. (Induction hypothesis 2) Suppose $ W^* \subseteq W^{k,0}\subseteq S $ for some $ k\geq 0 $. Then, we have proven that $ W^* \subseteq W^{k,s_{ns}}=W^{k+1,0} \subseteq S $. Therefore by induction, $ W^*\subseteq W^{k,0}\subseteq S $ for any $ k\geq 0 $. 
	
	The two induction arguments above prove that $ W^*\subseteq W^{k,l} $ for any $ k\geq 0 $ and $ 0\leq l\leq s_{ns} $.
	
	Now let us show that $ W^{0,0}, W^{0,1}, W^{0,2},..., W^{k,0},W^{k+1,1},...$ is a non-expanding sequence. Since $ W^{k,s_{ns}} = W^{k+1,0} $ for all $ k\geq 0 $, it suffices to show $ W^{k,l+1}\subseteq W^{k,l} $ for any $ k\geq 0 $ and $ 0\leq l\leq s_{ns}-1 $.
	
	(Base case 3) Note that $ InvPre^{f_i}(G,V,S)\subseteq S_i $ for arbitrary $ V\subseteq X $ and $ i\in Q $. Thus by definition $ W^{0,1}_i=InvPre^{f_i}(G,W^{0,0},S) \subseteq S_i = W^{0,0}_i$ for $ i= 1 $. Note that $ W^{0,1}_j = W^{0,0}_j $ for all $ j\not= 1 $. Thus $ W^{0,1}\subseteq W^{0,0} $. Now consider $ W^{0,2} $. Note that $ W^{0,2}_j=W^{0,1}_j $ for all $ j\not=2 $. For $ i= 2 $, $ W^{0,2}_i = InvPre^{f_i}(G,W^{0,1},S)\subseteq S_i = W^{0,1}_i $. Thus $ W^{0,2}\subseteq W^{0,1} $. Similarly, we have $ W^{0,s_{ns}}\subseteq...\subseteq  W^{0,1}\subseteq W^{0,0} $. 
	
	(Induction hypothesis 3) Suppose $W^{k,s_{ns}}\subseteq ...\subseteq W^{k,1} \subseteq W^{k,0}$ for some $ k > 0$. To show that $ W^{k+1,l+1}\subseteq W^{k+1,l}$ for all $ l $, we need another induction argument. (Base case 4) We know $ W^{k+1,0} = W^{k,s_{ns}} $, and $ W^{k+1,1}_j =  W^{k+1,0}_j$ for $ j\not= 1 $. For $ i= 1 $, $W^{k+1,0}_i= W^{k,s_{ns}}_i =  W^{k,1}_i = InvPre^{f_i}(G,W^{k,0},S)$. By induction hypothesis 3, $W^{k,s_{ns}}\subseteq W^{k,0} $ and thus by Lemma \ref{lemma:1} and \ref{lemma:2}, $$ W^{k+1,1}_i = InvPre^{f_i}(G,W^{k+1,0},S) \subseteq InvPre^{f_i}(G,W^{k,0},S) = W^{k+1,0}_i $$ for $ i=1 $, and therefore $ W^{k+1,1}\subseteq W^{k+1,0} $.
	
	(Induction hypothesis 4) Suppose that $ W^{k+1,l}\subseteq W^{k+1,l-1}\subseteq ...\subseteq W^{k+1,0} $. By definition, $ W^{k+1,l+1}_j =  W^{k+1,l}_j$ for all $ j\not= l+1 $. Also, for $ i= l+1 $, $ W^{k+1,l}_i =   W^{k,l+1}_i = InvPre^{f_i}(G,W^{k,l},S)$. By the induction hypothesis 3 and 4, $ W^{k+1,l}\subseteq W^{k,l} $ and thus by Lemma \ref{lemma:1} and \ref{lemma:2} again, for $ i=l+1 $, $$ W^{k+1,l+1}_i = InvPre^{f_i}(G,W^{k+1,l},S) \subseteq InvPre^{f_i}(G,W^{k,l},S) =W^{k+1,l}_i $$ and therefore $ W^{k,l+1}\subseteq W^{k,l} $. Then by induction 4, we have $ W^{k+1,s_{ns}} \subseteq ... \subseteq W^{k+1,1}\subseteq W^{k+1,0} $.
	
	Therefore by the induction 3, we show that $ W^{0,0},..., W^{k,0},W^{k+1,1},...$ is non-expanding.
	
	By far, we have shown that $ W^{0,0}W^{0,1} ... W^{k,0} W^{k,1}...$ is a monotonic non-expanding sequence within $ S $, which implies that the limit of this sequence $ W^{\infty,0} $ (the output of Algorithm \ref{alg:md2_nd}) exists and is contained by $ S $ thus safe. By line $ 7 $-$ 12 $ of Algorithm \ref{alg:md2_nd}, $ W^{\infty,0} $ is a solution of equations in \eqref{eqn:invpre}. Also, since for any $ k$ and $l $, $ W^*\subseteq W^{k,l} $ and $ W^* $ is the maximal solution of equations in \eqref{eqn:invpre}, we have $ W^{\infty,0}=W^{*} $. 	
\end{proof}

Note that the above proof also guarantees termination if the switched system under consideration has finitely many states. For switched systems with continuous state spaces, the non-expanding property of the computed sets guarantees convergence but termination in finite number of steps is not guaranteed, in general. For linear switched systems, termination can still be guaranteed using algorithms from \cite{de2004computation,rungger2017computing} by slightly sacrificing maximality (see also \cite{smith}).

Once the maximal winning set (or a winning set) $ W^* = \{W^*_i\}_{i\in Q} $ is obtained, a controller can be extracted roughly as follows: for a sink node $ i\in Q_{s} $, the allowable control inputs for each state in the controlled invariant set $ W^*_i $ can be obtained by applying the $ Pre $ operator to $ W^*_i $. For a non-sink node $ j\in Q_{ns} $, we need a ``invariance" controller to make sure the system state remain in $ W^*_i $ before a preview happens, and a ``reachability" controller for each transition $ (j,k)\in E $ and each possible preview time $ \tau_{jk}\in T(j,k) $ such that from the time point a preview is received by the controller, system state can guarantee to reach $ W_k $ in $ \tau_{jk} $ steps, where the allowable control input for each step can be obtained by applying the $ PreInt $ recursively for $ \tau_{jk} $ times. For the ``invariance" controller, we also need to make sure that the system state reaches certain parts of the maximal winning set based on the holding time (time steps elapsed since last transition) such that once a preview occurs, the system state is within the domain of the corresponding ``reachability" controller. The process of computing the ``reachability" controllers actually corresponds to line 2-7 in Algorithm \ref{alg:invprend}, and the process of computing the ``invariance" controller corresponds to line 9 and 12-14, if the preview automaton has a singleton preview time interval. The process can be generalized to general preview time intervals from the Algorithm \ref{alg:invprend} based on the description above.

\section{Case Studies}\label{sec:cases}
In the following case studies, we apply the proposed algorithms to switched affine systems, where the state space and safe set are polytopes. In this case $Pre$ and $PreInv$ operators reduces to polytopic operations, which we implement using the MPT3 toolbox \cite{MPT3}.

\subsection{Vehicle Cruise Control}
Our first example is a cruise control problem for the scenario shown in Example \ref{eg:1}. The longitudinal dynamics of a vehicle with road grade is given by
\begin{align}
\dot{v} = -\frac{f_0}{m}  -\frac{f_1}{m}v + \frac{F_w}{m} - g\sin\theta\label{eqn:dyn_cc}
\end{align}
where $ v $ is the longitudinal speed, $ m $ is the vehicle mass, $ f_0 $ and $ f_1 $ are the coefficients related to frictions, $ F_w $ is the wheel force, $ g $ is the gravitational acceleration and $ \theta $ is the road grade. We choose $ F_w $ as the control input and $ \theta $ as a disturbance.  We discretize \eqref{eqn:dyn_cc} with time step $ \Delta t= 0.1s $. The discrete-time dynamics with disturbance ranges $ r_1$, $ r_2$ and $ r_3 $ consist of the modes $ 1 $, $ 2 $, $3 $ in the switched system defined in Example \ref{eg:1}.

The safety specification is to keep the longitudinal speed within $ X=[31.95 ,32 ]m/s $. The speed range is intentionally picked small enough so that the change the road grade induces on the dynamics makes the specification hard to be satisfied. The parameters are chosen as $ m=1650 kg$, $ f_0 = 0.1 N $, $ f_1 = 5 N\cdot s/m $, $ g = 10m/s^2 $. The control input range is $ F_w\in [-0.65 mg,0.66mg] $. For the preview automaton shown in Fig. \ref{fig:previewautomata}, the holding time for each mode is $ 2 $ and the preview time for each transition is $ 1 $.

To make a comparison, we compute the maximal controlled invariant set for the dynamics discretized from \eqref{eqn:dyn_cc} with disturbance in $ [-30.5^\circ,30.5^\circ] $ (convex hull of $ r_1 $, $r_2 $, $r_3 $). If such an invariant set exists, it is a feasible winning set for our problem. However, the resulting controlled invariant set is empty, which suggests that the problem is infeasible if disturbance can vary arbitrarily in $ [-30.5^\circ, 30.5^\circ] $. In contrast, the winning set obtained from Algorithm \ref{alg:md2_nd} is $\{ W_i\}_{i=1}^3 $ with $ W_1 = W_2 = W_3 = X $. Therefore the preview automaton is crucial in this case study for the existence of a safety controller.
\subsection{Vehicle Lane Keeping Control}

In the second example, we apply the proposed method to synthesize a lane-keeping controller, which controls the steering to limit the lateral displacement of vehicle within the lane boundaries. 

The lateral dynamics we use are from a linearized bicycle model\cite{smith}. The four states of the model consist of the lateral displacement $ y $, lateral velocity $ v $, yaw angle $ \Delta\Psi $ and yaw rate $ r $.  The vehicle is controlled by the steering input $ \delta_f $ in range $ [-\pi/2,\pi/2] $. We assume that the longitudinal velocity $u$ of the vehicle is constant and equal to $30 m / s$. The disturbance $r_d $ is a function of the road curvature, which is what we assume to have preview information on at run-time.

The maximal recommended range of $ r_d $ on Michigan highways\cite{michiganroad} with respect to $u=30m / s$ is about $ [-0.06,0.06] $. We divide $ [-0.06,0.06] $ evenly into $ 5 $ intervals $ d_1 = [-0.06,-0.036]$, $d_2=[-0.036,-0.012]$, ..., $d_5=[0.036,0.06] $ and construct a switched system with $ 5 $ modes, where each mode $i\in Q =\{1,2,3,4,5\} $ corresponds to a lateral dynamics with $ r_d $ bounded in $ d_i $, denoted by $f_i$. The corresponding preview automaton is shown in Fig. \ref{fig:pacasestudy}, where transitions are only between any two modes with  adjacent $ r_d  $ intervals. For simplicity, the preview time interval $T\left( i,j \right) = \tau_c $ for all $\left( i,j \right) \in E$, and the least holding time $H\left( i \right) = \tau_d$ for all $i\in Q$ for some constants $\tau_c$ and $\tau_d$.

The safe set is given by the constraints  $\vert y \vert \leq 0.9$, $\vert v\vert \leq 1.2$, $\vert \Delta \Psi \vert \leq 0.05$, $\vert r\vert \leq 0.3 $ for all modes. 

\begin{table}[h]
	\centering
	\caption{Computation costs for different $(\tau_c,\tau_d)$}
	\label{tab:comp_cost}
	\begin{tabular}{ccc}
		\hline
		$ (\tau_c,\tau_d) $ & \#iterations & time (min)\\ 
		\hline
		$(1,2)$ & $4$ & $18.9$ \\
		$ (2,2)$  & $4$ & $18.0$ \\ 
		$(1,1)$  & $5$ & $20.3$ \\ 
		$(5,5)$  & $3$ & $16.8$ \\  
		\hline
	\end{tabular}
\end{table}

We apply Algorithm \ref{alg:md2_nd} to compute the maximal winning sets for various $\tau_{c}$ and $\tau_d$. The values of $(\tau_c,\tau_d)$ with corresponding numbers of iterations at termination and running time are listed in Table \ref{tab:comp_cost}.  Denote the maximal winning set with respect to mode $2$ for each pair $(\tau_c,\tau_d)$ in Table \ref{tab:comp_cost} as $ W_{2,(\tau_c,\tau_d)} $. 

\begin{figure}
	\centering
	\includegraphics[width=0.7\linewidth]{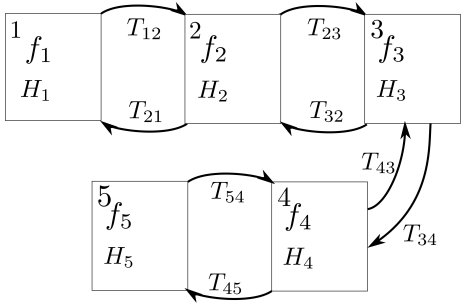}
	\caption{Preview automaton for the lane-keeping case study}
	\label{fig:pacasestudy}
\end{figure}

As a comparison, we compute the maximal controlled invariant set for the lateral dynamics with $ r_d $ in $ [-0.06,0.06] $, denoted by $ W_{inv} $. The projections of $ W_{2,(\tau_c,\tau_d)} $ and $ W_{inv} $ onto $3$-dimensional subspaces are shown in Figs. \ref{fig:proj_H2T12} and \ref{fig:proj_T1H15}.

Fig. \ref{fig:proj_H2T12} compares $W_{inv}$, $ W_{2,(1,2)}$ and $W_{2,(2,2)}$, where the holding time $\tau_d$ is fixed and the preview time $\tau_c$ are tuned to show the effect of preview time on winning set. In theory, $ W_{inv}\subseteq W_{(\tau_c,\tau_d)}\subseteq W_{(\tau_c',\tau_d')} $ for any $\tau_c \leq  \tau_c'$ and $\tau_d \leq  \tau_d'$, which is verified by the numerical result where $W_{inv}\subseteq W_{2,(1,2)} \subseteq W_{2,(2,2)}$. The blue region in Fig. \ref{fig:proj_H2T12} shows the difference of $ W_{2,(1,2)} $ and $ W_{inv} $, indicating how much we gain from the preview information with $(\tau_c,\tau_d) = (1,2)$ compared to no preview. The green region in Figure \ref{fig:proj_H2T12} shows the difference of $W_{2,(2,2)}$ and $W_{2,(1,2)}$, which indicates how much the maximal winning set grows as the preview time $ \tau_c $ increases from $1$ to $2$ while the least holding time $\tau_d = 2$ is fixed. As revealed by the size of the green region in Fig. \ref{fig:proj_H2T12}, the growth of the maximal winning set decreases as the preview time becomes one step longer. Understanding the conditions under which a longer preview does or does not help the growth of the maximal winning set is subject of our future work.  

\begin{figure}
	\centering
	\begin{subfigure}{0.49\linewidth}
		\centering
		\includegraphics[width=1\linewidth]{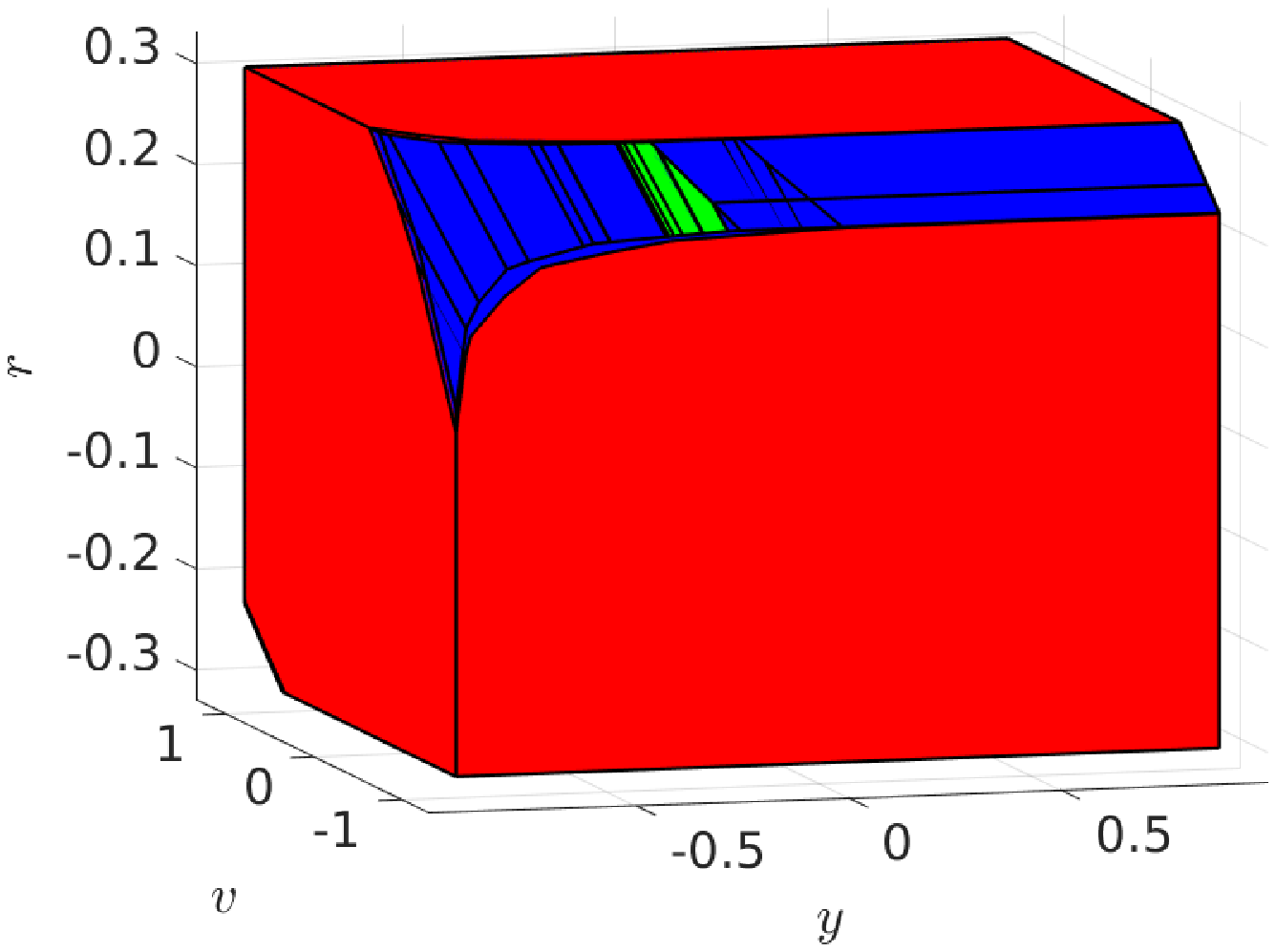}
		\caption{project on $ (y,v,\Delta \Psi) $}
		\label{fig:proj1_H2T12}
	\end{subfigure}
	\begin{subfigure}{0.49\linewidth}
		\centering
		\includegraphics[width=1\linewidth]{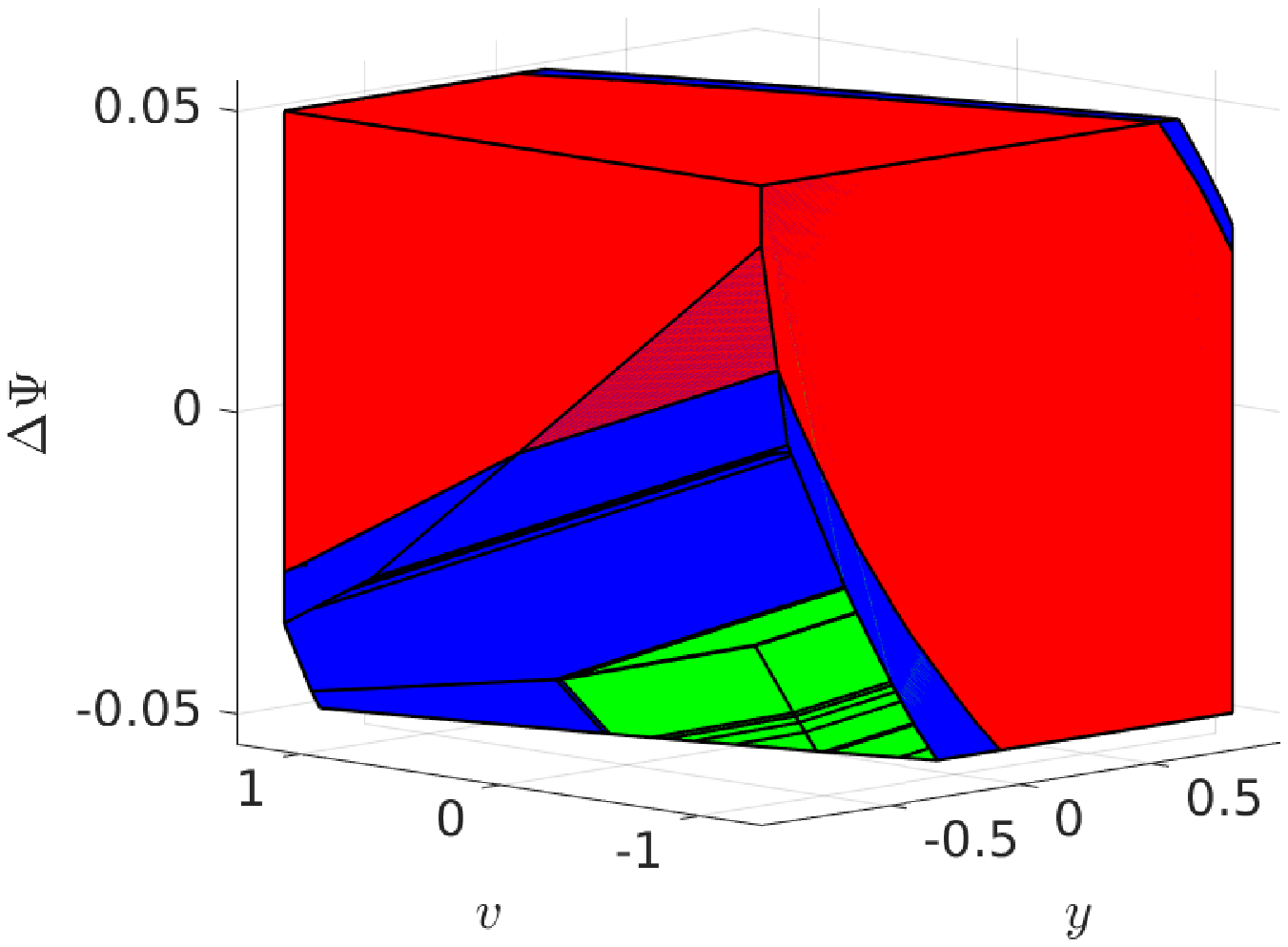}
		\caption{project on $ (y,v,r) $ }
		\label{fig:proj2_H2T12}		
	\end{subfigure}
	\caption{Projections of $W_{inv}, W_{2,(1,2)}, W_{2,(2,2)}$ onto two subspaces. The red, blue and green regions are the projection of  $W_{inv}$, the difference of projections of $W_{2,(1,2)}$ and $W_{inv}$ and the difference of projections of $W_{2,(2,2)}$ and $W_{2,(1,2)}$. }
	\label{fig:proj_H2T12}
\end{figure}

Fig. \ref{fig:proj_T1H15} compares $W_{inv}$, $W_{2,(1,1)}$ and $W_{2,(1,5)}$, where we fix the preview time $\tau_c$ and change the least holding time $\tau_d$. The blue and green regions show the difference of $W_{2,(1,1)}$ and $W_{inv}$ and the difference of $W_{2,(1,5)}$ and $W_{2,(1,1)}$. Therefore, the size of the green region indicates how much the winning set grows as we increase the least holding time $\tau_d$ from $1$ to $5$. Compared to Fig. \ref{fig:proj_H2T12}, the winning set is more sensitive to the change of the least holding time  $\tau_d$ than the change of the preview time $\tau_c$. 

Finally, $W_{inv} $, $W_{2,(1,1)}$, $W_{2,(5,5)}$ are compared in Fig. \ref{fig:proj_T1H15}, where we increase $\tau_c$ and $\tau_d$ simutaneously. $W_{2,(5,5)}$ is numerically equal to  $W_{2,(1,5)}$, and thus its projections are the same as the projections of $W_{2,(1,5)}$ shown in Fig. \ref{fig:proj_T1H15}.  In fact, the winning sets with respect to modes $2$, $3$, $4$ for $\tau_c=1,\tau_d=5$ and $\tau_c=5,\tau_d=5$ are numerically equal; the winning set with respect to mode  $1$ and $5$ slightly grows when  $(\tau_c,\tau_d)$ changes from $(1,5)$ to $(5,5)$, but the growth is too small to be visualized. The observation in Fig. \ref{fig:proj_H2T12} and \ref{fig:proj_T1H15} reveals one theoretical conjucture: If the preview time and the least holding time are large enough, a longer preview time and/or a longer holding time will not increase the size of the maximal winning set. That is, the size of the maximal winning set converges as the preview time and the least holding time increase. To verify this conjecture is part of our future work.

\begin{figure}
    \begin{subfigure}{0.49\linewidth}
		\centering
		\includegraphics[width=1\linewidth]{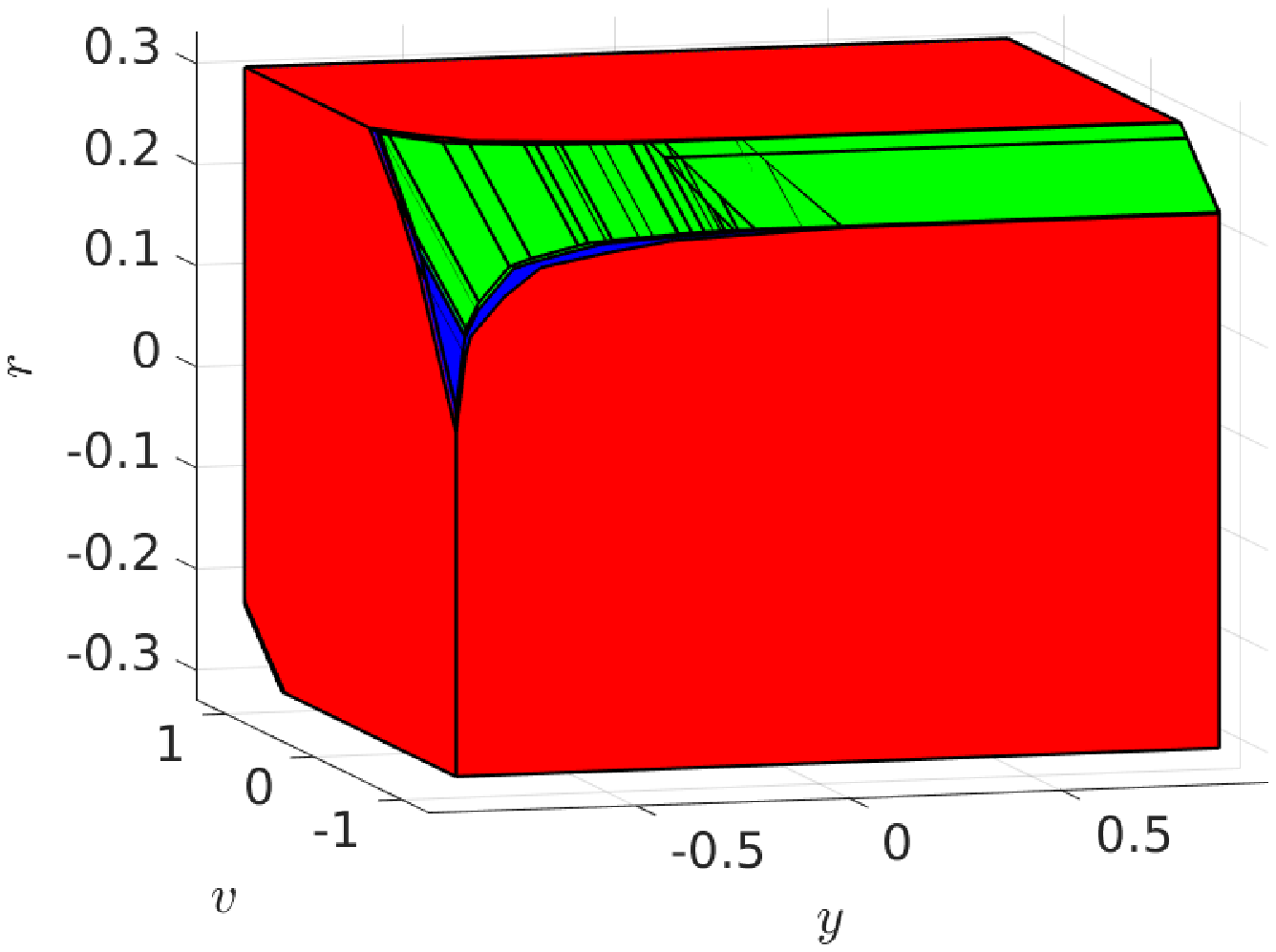}
		\caption{project on $ (y,v,\Delta \Psi) $ }
		\label{fig:proj1_T1H15}
	\end{subfigure}
	\begin{subfigure}{0.49\linewidth}
		\centering
		\includegraphics[width=1\linewidth]{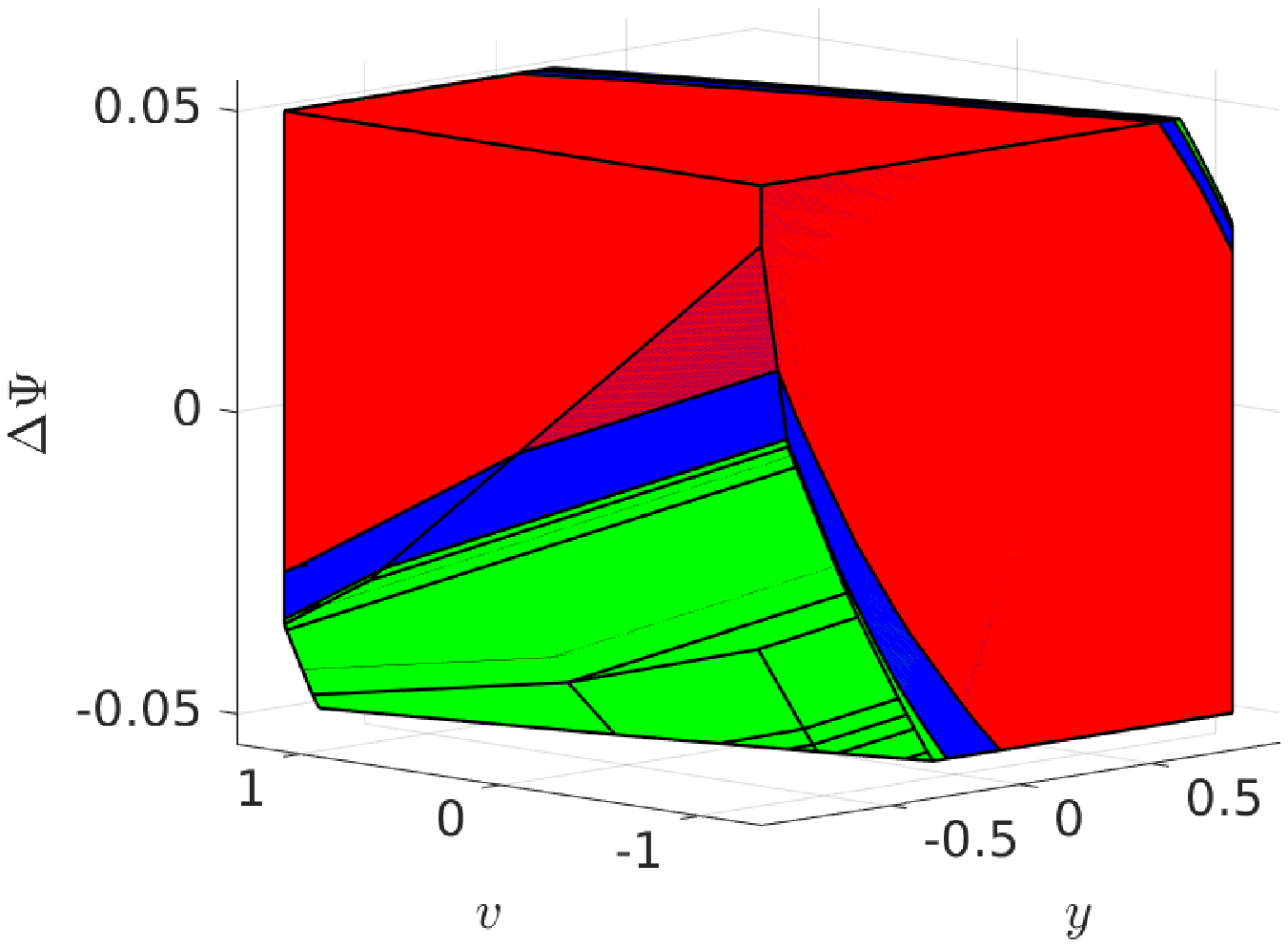}
		\caption{project on $ (y,v,r) $ }
		\label{fig:proj2_T1H15}		
	\end{subfigure}
	\caption{Projections of $W_{inv}, W_{2,(1,1)}, W_{2,(1,5)} $ onto two subspaces. The red, blue and green regions are the projection of  $W_{inv}$, the difference of projections of $W_{2,(1,1)}$ and $W_{inv}$ and the difference of projections of $W_{2,(1,5)}$ and $W_{2,(1,1)}$. }
	\label{fig:proj_T1H15}
\end{figure}

\section{Conclusions and Future Work}\label{sec:conclusions}
In this paper, we introduced preview automaton and provided an algorithm for safety control synthesis in the existence of preview information. 
The proposed algorithm is shown to compute the maximal winning set upon termination. These ideas are demonstrated
with two examples from the autonomous driving domain. As shown in these examples, incorporation of preview information
in control synthesis leads to less conservative safety guarantees compared to standard controlled invariant set based approaches.
 In the future, we will investigate the 
use of preview automaton for synthesizing controllers from more general specifications. We also have some ongoing work
investigating the connections of preview automaton with discrete-time I/O hybrid automaton with clock variables representing preview and holding times.

\bibliographystyle{IEEEtran}
\bibliography{cdc}

\section*{Appendix}
\begin{proof}[Proof of Lemma \ref{lemma:1}]
	By definition of $ Pre $, for any sets $ W_1\subseteq W_2 \subseteq S$, $ Pre^{f_1}(W_1)\subseteq Pre^{f_1}(W_2)$ for any $ i\in Q $. Therefore $ PreInt^{f_i} (W_1,S_i) \subseteq PreInt^{f_i} (W_2,S_i)\subseteq S_i$ and furthermore $ Inv^{f_i}(W_1)\subseteq Inv^{f_i}(W_2)\subseteq S_i $ for any $ i\in Q $. According to line \ref{line:init} of Algorithm \ref{alg:md2_nd}, the input $ W $ of $ InvPre $ is always a subset of $ S $. Note that all the intermediate variables $ C_{l,j}$, $C_{T_{min}}$, $C_k $ are recursively computed by $ PreInt $ and $ Inv $. Based on the monotonicity of $ PreInt $ and $ Inv $, we can check step by step that the values of the intermediate variables $ C_{l,j}$, $C_{T_{min}}$, $C_k $ of $ InvPre $ with respect to input $ W_1 $ are contained by the values of those variables with respect to input $ W_2 $. Therefore $ InvPre^{f_i}(G,W_1,S) \subseteq InvPre^{f_i}(G,W_2,S)\subseteq S_i$.
\end{proof}

\begin{proof}[Proof of Lemma \ref{lemma:2}]
	First note that $ InvPre^{f_i} \{ G, W, S\} $ only depends on $ G $, $ \{W_j\}_{j\in Post^G(i)} $ and $ S $, though we use the whole $ W $ instead of $ \{W_j\}_{j\in Post^G(i)} $ as input of Algorithm \ref{alg:invprend} for short. In this proof, we change the notation to $ InvPre^{f_i}(G,\{W_j\}_{j\in Post^G(i)}, S) $ to make this point clear.

The key observation in this proof is: if the system switches from node $ i $ to node $ j $ at some time $ t $ with state $ x(t) $, for the purpose of synthesizing future control strategies, it is equivalent to the case that the system initially starts from the state $ (j,x(t)) $, and therefore there exists a controller to guarantee safety for the rest of the run if and only if $ x(t)\in W^*_j $.
 
Denote $ \{W^*_i\}_{i\in Q} $ as the maximal winning set. By Proposition \ref{prop:max}, $ W^*_j=Inv^{f_j}(S_j) $ for all sink states $ j\in Q_s $. Let $ k\in  Q_{ns} $. Now suppose that we know the maximal winning set $ W^*_k $ with respect to all $ k\in  Q_{ns} $ except $ i $. We want to show that $ W^*_i $ can actually be computed by $ InvPre^{f_i}(G,\{W^*_j\}_{j\in Post(i)},S) $. Note that $ i\not\in Post(i) $ since we do not allow self-loops in the preview automaton.

Denote the minimum preview time among all feasible transitions as $ T_{min} = \min_{j\in Post^G(i)}T(i,j) $. Let us first consider the case where no preview happens during the time interval $ [0,t-1] $ with $ t\geq H(i)-T_{min} $. In this case, the least holding time constraint will be satisfied whatever the next transition is. Let us consider the maximal set of states $ C_{T_{min}} $ such that if $ x(t) $ is within $ C_{T_{min}} $, there exists a controller that makes the closed-loop system satisfy the safety spec.

Suppose that one preview happens at $ t $ with destination state $ j\in Post^G(i) $ and remaining time $ \tau_{ij} =T(i,j) $. To guarantee the safety spec being satisfied in the future, for any $ t\leq t'< t+\tau_{ij} $, $ x(t') $ has to be within $ S_i $, and $ x(t+\tau_{ij}) $ has to be within $ W^*_j$ at time $ t+\tau_{ij} $. The maximal subset of $ S_i$ that is able to reach $ W^*_j $ in one step is given by $ PreInv^{f_i}(W^*,S) $. By applying $ PreInv $ $ \tau_{ij} $ many times, we obtain the maximal subset of states in $ S_i $ that can stay within $ S_i $ until reaching $ W^*_j $ at $ t+\tau_{ij} $, which is $ C_{T_{ij},j} $ in line \ref{line:for_C_ij}-\ref{line:end_for_C_ij} of Algorithm \ref{alg:invprend}. The intersection $ \bigcap_{j\in Post^G(i)} C_{T_{ij}} $ is the maximal set of values of $ x(t) $ that guarantees safety spec under the condition that a preview takes place at $ t $.  If there is no preview at $ t $, $ t+1 $, $ ... $, the system needs to stay within $ \bigcap_{j\in Post^G(i)} C_{T_{ij}} $ for safety at time $ t+1 $, $t+2 $, ... Therefore $ C_{T_{min}}  = Inv^{f_i}(\bigcap_{j\in Post^G(i)} C_{T_{ij}})$ (line 9).

Now consider the case where no preview takes place between $ [0,t-1] $ where $ t = H(i)-T_{min}-1 $. Denote $ C_{T_{min}+1} $ as the maximal subset of $ S_i $ such that if $ x(t) \in C_{T_{min}+1}$ there exists a controller to guarantee safety spec; otherwise there is no such a controller. If there is no preview at time $ t $, all we want for safety is $ x(t+1)\in C_{T_{min}} $. Then $ PreInv^{f_i}(C_{T_{min}}) $ is the maximal set of values of $ x(t) $ such that there exists a control input to make sure $ x(t+1)\in C_{T_{min}} $. Otherwise, if there is a preview at $ t $ with destination state $ j $, all we want is $ x(t)\in C_{T_{ij}} $ by the previous discussion. 
The set of all feasible destination states that can have preview at $ t $ is $ J_{T_{min}+1} =\{j\in Post^G(i):T_{ij}\geq T_{min}+1\} $. Therefore, depending on whether $ J_{T_{min}+1} $ is empty or not, $ C_{T_{min}+1} $ is equal to $ PreInv^{f_i}(C_{T_{min}})  $ or $ PreInv^{f_i}(C_{T_{min}}) \cap (\bigcap_{j\in J_{T_{min}+1}} C_{T_{ij}}) $ (line \ref{line:C_k_1} and \ref{line:C_k_2}), which is guaranteed to have a safety controller for all the situations. The same discussion can be repeated for $ t =  H(i)-T_{min}- k $ up to $ k = H(i)-T_{min} $, resulting in $ C_{T_{min}+k} = PreInv^{f_i}(C_{T_{min}+k-1}) \cap \bigcap_{j\in J_{T_{min}+k}} C_{T_{ij}}  $ for $ J_{T_{min}+k} = \{j\in Post^G(i):T_{ij}\geq T_{min}+k\} $. For the case $ k = H(i) - T_{min} $, $ t=0 $ and $ C_{H(i)} $ is the maximal set of values of $ x(0) $ such that if $ x(0)\in C_{H(i)} $, there exists a controller that makes sure the closed-loop system satisfy the safety spec, which is the maximal winning set with respect to $ i $. Therefore $ W^*_i = C_{H(i)} = InvPre^{f_i}(G,\{W^*_j\}_{j\in Post^G(i)},S) $ (line \ref{line:C_k_2}), and $ \{W^*_j\}_{j\in  Q_{ns}} $ is a solution of equations in \eqref{eqn:invpre}.

We have proved that the maximal winning set must satisfy the equations in \eqref{eqn:invpre}. On the other hand, by construction of $ InvPre $, it can be verified that any solution $ \{W_i\}_{i\in  Q_{ns}} $ with $ W_j\subseteq S_j $ for $ j\in  Q_{ns} $ of the equations in \eqref{eqn:invpre} with respect to $ \{W^*_k\}_{k\in Q_s} $ is a winning set (not necessarily maximal). Also, since by definition, the union of winning sets is still a winning set, the maximal winning set must be unique and contain all the winning sets. Therefore, $ \{W^*_i\}_{i\in Q_{ns}} $ must be the maximal solution of equations in \eqref{eqn:invpre} with respect to $ \{W^*_k\}_{k\in Q_s} $. 
\end{proof}

\end{document}